\documentclass[envcountsame]{llncs}
\pdfoutput=1
\usepackage{amsmath,amssymb}
\usepackage{caption}
\usepackage{hyperref}
\usepackage{cleveref}
\usepackage{algorithm}
\usepackage[noend]{algpseudocode}
\usepackage{enumerate}
\usepackage{versions}
\usepackage{mathtools} 
\usepackage{pdflscape}

\usepackage[american]{babel}
\usepackage[babel]{csquotes}
\usepackage[maxcitenames=2,maxbibnames=9,backend=bibtex,doi=false,isbn=false,url=false]{biblatex}
\bibliography{references}

\excludeversion{SHORT}
\includeversion{LONG}
\excludeversion{INUTILE}
\excludeversion{TODO}

\widowpenalty=10000  
\clubpenalty=10000
\flushbottom

%

\DeclarePairedDelimiter\floor{\lfloor}{\rfloor}

\newcommand{\bigo}[1]{\mathcal{O}(#1)}

\newcommand{\tuple}[1]{( #1 )}
\newcommand{\tuplelr}[1]{\left( #1 \right)}

\begin{document}

\hyphenation{pro-blems}
\pagestyle{headings} 

\mainmatter 
\title{ Depth Distribution in High Dimensions \\{\small (Extended version\thanks{A preliminary version of these results were presented at the $23^{\text{rd}}$ Annual International Computing and Combinatorics Conference (COCOON'17)\cite{2017-COCOON-DepthDistributionInHighDimension-BabrayPerezRojas}})}}

\author{
 J\'er\'emy Barbay \inst{1}
\and
Pablo P\'erez-Lantero \inst{2}
 \and 
 Javiel Rojas-Ledesma\inst{1}
}

\institute{
 Departmento de Ciencias de la Computaci\'on, Universidad de Chile, Chile\\ 
 \email{jeremy@barbay.cl, jrojas@dcc.uchile.cl.}
\and
Departmento de Matem\'atica y  Computaci\'on, Universidad de Santiago, Chile.\\ 
\email{pablo.perez.l@usach.cl.}
}



\maketitle

\begin{abstract} 
Motivated by the analysis of range queries in databases, we introduce the computation of the \textsc{Depth Distribution} of a set $\mathcal{B}$ of axis aligned boxes, whose computation generalizes that of the \textsc{Klee's Measure} and of the \textsc{Maximum Depth}. In the worst case over instances of fixed input size $n$, we describe an algorithm of complexity within $\bigo{n^\frac{d+1}{2}\log n}$, using space within $\bigo{n\log n}$, mixing two techniques previously used to compute the \textsc{Klee's Measure}. We refine this result and previous results on the \textsc{Klee's Measure} and the \textsc{Maximum Depth} for various measures of difficulty of the input, such as the profile of the input and the degeneracy of the intersection graph formed by the boxes.
\end{abstract}


\section{Introduction}\label{sec:Introduction}				 

Problems studied in Computational Geometry have found important applications in the processing and querying of massive databases~\cite{AboKhamis2015}, such as the computation of the \textsc{Maxima} of a set of points~\cite{Afshani14,2017-JACM-InstanceOptimalGeometricAlgorithms-AfshaniBarbayChan}, or compressed data structures for \textsc{Point Location} and \textsc{Rectangle Stabbing}~\cite{Afshani2012}.
In particular, we consider cases where the input or queries are composed of axis-aligned boxes in $d$ dimensions: in the context of databases it corresponds for instance to a search for cars within the intersection of ranges in price, availability and security ratings range. 

Consider a set $\mathcal{B}$ of $n$ axis-parallel boxes in $\mathbb{R}^d$, for fixed $d$.
\begin{LONG}
We focus on two measures on such set of boxes: the \textsc{Klee's measure} and the \textsc{Maximum Depth}.
\end{LONG}
The \textsc{Klee's Measure} of $\mathcal{B}$ is
\begin{LONG}
 the size of the ``shadow'' projected by $\mathcal{B}$, more formally 
\end{LONG}
the volume of the union of the boxes in $\mathcal{B}$. 
Originally suggested on the line by Klee~\cite{Klee1977}, its computation is well studied in higher dimensions~\cite{bringmann2012,bringmann2013,Chan08,Chan2013,Overmars1991,Yildiz11,YildizHershbergerSuri11,Yildiz2012}, and can be done in time within $\bigo{n^{d/2}}$, using an algorithm introduced by Chan~\cite{Chan2013} based on a new paradigm called ``Simplify, Divide and Conquer".
The \textsc{Maximum Depth} of $\mathcal{B}$ is the maximum number of boxes covering a same point, and its computational complexity is similar to that of \textsc{Klee's Measure}'s, converging to the same complexity within $\bigo{n^{d/2}}$~\cite{Chan2013}.

\noindent\emph{Hypothesis.}
The known algorithms to compute these two measures are all strikingly similar\begin{LONG}, to the point that Chan~\cite{Chan2013} states that all known techniques used for computing the \textsc{Klee's Measure} can be applied to the computation of the \textsc{Maximum Depth}\end{LONG}.
That would suggest a reduction from one to the other, but those two measures are completely distinct: the  \textsc{Klee's measure} is a volume whose value can be a real number, while the \textsc{Maximum Depth} is a cardinality whose value is an integer in the range $[1..n]$.
\textbf{Is there any way to formalize the close relationship between the computation of these two measures?}

\noindent\emph{Our Results.}
We describe a first step towards such a formalization, in the form of \textbf{a new problem}, which we show to be intermediary in terms of the techniques being used, between the \textsc{Klee's Measure} and the \textsc{Maximum Depth}, slightly more costly in time and space, and with interesting applications and results of its own.
%
We introduce the notion of \textsc{Depth Distribution} of a set $\mathcal{B}$ of $n$ axis-parallel boxes in $\mathbb{R}^d$, formed by the vector of $n$ values $(V_1,\ldots,V_n)$, where $V_i$ corresponds to the volume covered by exactly $i$ boxes from $\cal B$.
The \textsc{Depth Distribution} of a set  $\mathcal{B}$  can be interpreted as a probability distribution function (hence the name): if a point $p$ is selected uniformly at random from the region covered by the boxes in $\mathcal{B}$, the probability that $p$ hits exactly $k$ boxes from $\mathcal{B}$ is $\left( {V_k} \middle/ {\sum_{i=1}^{n}{V_i}} \right)$, for all $ k \in [1..n]$.

The \textsc{Depth Distribution} \textbf{refines both} the \textsc{Klee's Measure} and the \textsc{Maximum Depth}. 
It is a measure finer than the \textsc{Klee's Measure} in the sense that the \textsc{Klee's Measure} of a set $\cal B$ can be obtained in time linear in the size $n$ of $\cal B$ by summing the components of the \textsc{Depth Distribution} of $\cal B$.
Similarly, the \textsc{Depth Distribution} is a measure finer than the \textsc{Maximum Depth} in the sense that the \textsc{Maximum Depth} of a set $\cal B$ can be obtained in linear time by finding the largest $i\in[1..n]$ such that $V_i\neq0$. 
%
In the context of a database, when receiving multidimensional range queries (e.g. about cars), the \textsc{Depth Distribution} of the queries yields valuable information to the database owner (e.g. a car dealer) about the repartition of the queries in the space of the data, to allow for informed decisions on it (e.g. to orient the future purchase of cars to resell based on the clients' desires, as expressed by their queries).

In the classical computational complexity model where one studies the worst case over instances of fixed size $n$,
\begin{LONG}
	the trivial approach of partitioning the space in cells that are completely contained within all the boxes they intersect, results in a solution with prohibitive running time (within $\bigo{n^{d+1}}$). Simple variants of the techniques previously used to compute the \textsc{Klee's Measure}~\cite{Chan2013,Overmars1991} results in a solution running in time within $\bigo{n^{d/2+1}}$, using linear space, or a solution running in time within $\bigo{n^{(d/2+1/2}\log n}$, but using space within $\bigo{n^{d/2}\log n}$.
	We combine those two into a single technique to compute the \textsc{Depth Distribution} in time within $\bigo{n^\frac{d+1}{2}\log n}$, using space within $\bigo{n\log n}$ (in Section~\ref{sec:classical-worst-case})
\end{LONG}
\begin{SHORT}
	we combine techniques previously used to compute the \textsc{Klee's Measure}~\cite{Chan2013,Overmars1991} to compute the \textsc{Depth Distribution} in time within $\bigo{n^\frac{d+1}{2}\log n}$, using space within $\bigo{n\log n}$ (in Section~\ref{sec:classical-worst-case}).
\end{SHORT}
This solution is slower by a factor within $\bigo{\sqrt{n} \log n}$ than the best known algorithms for computing the \textsc{Klee's Measure} and the \textsc{Maximum Depth}: we show in Section~\ref{sec:lower_bounds} that such a gap might be ineluctable, via a reduction from the computation of \textsc{Matrix Multiplication}.

In the refined computational complexity model where one studies the worst case complexity taking advantage of \textbf{additional parameters describing the difficulty} of the instance~\cite{2017-JACM-InstanceOptimalGeometricAlgorithms-AfshaniBarbayChan,Kirkpatrick1985,MoffatP92}, we consider (in \Cref{sec:finer_analysis}) distinct measures of difficulty for the instances of these problems, such as the \emph{profile} and the \emph{degeneracy} of the intersection graph of the boxes, and describe algorithms in these model to compute the \textsc{Depth Distribution}, the \textsc{Klee's Measure} and the \textsc{Maximum Depth} of a set $\cal B$.
 \begin{INUTILE}
 Javiel: can we use those results on polynome multiplication too?
 \end{INUTILE}

 After a short overview of the known results on the computation of the \textsc{Klee's Measure} and the \textsc{Maximum Depth} (in \Cref{sec:preliminaries}), we describe in Section~\ref{sec:computing_depthdist} the results in the worst case over instances of fixed size. In Section~\ref{sec:finer_analysis}, we describe results on refined partitions of the instance universe, both for the computation of the \textsc{Depth Distribution}, and for the previously known problems of computing the \textsc{Klee's Measure} and the \textsc{Maximum Depth}.  We conclude in Section~\ref{sec:discussion} with a discussion on discrete variants and further refinements of the analysis.

\section{Background}
\label{sec:preliminaries}
The techniques used to compute the \textsc{Klee's Measure} have evolved over time, and can all be used to compute the \textsc{Maximum Depth}.
We retrace some of the main results, which will be useful for the definition of an algorithm computing the \textsc{Depth Distribution} (in \Cref{sec:computing_depthdist}), and for the refinements of the analysis for \textsc{Depth Distribution}, \textsc{Klee's Measure}  and \textsc{Maximum Depth} (in \Cref{sec:finer_analysis}).

The computation of the \textsc{Klee's Measure} of a set $\mathcal{B}$ of $n$ axis-aligned $d$-dimensional boxes was first posed by Klee~\cite{Klee1977} in 1977.
After some initial progresses~\cite{Bentley1977,Fredman78,Klee1977}, Overmars and Yap~\cite{Overmars1991} described  a  solution running in time within $\bigo{n^{d/2}\log n}$. This remained the best solution for more than 20 years until 2013, when Chan~\cite{Chan2013} presented a simpler and faster algorithm running in time within $\bigo{n^{d/2}}$.

The algorithms described by Overmars and Yap~\cite{Overmars1991} and by Chan~\cite{Chan2013}, respectively, take both advantage of solutions to the special case of the problem where all the boxes are slabs. 
A box $b$ is said to be a \emph{slab} within another box $\Gamma$ if $b \cap \Gamma =  \{(x_1, \ldots, x_d) \in \Gamma \mid \alpha \le x_i  \le \beta\}$, for some integer $i \in [1..d]$ and some real values $\alpha,\beta$ (see \Cref{fig:slabs} for an illustration). 
Overmars and Yap~\cite{Overmars1991} showed that, if all the boxes in $\mathcal{B}$ are slabs inside the domain box $\Gamma$, then the \textsc{Klee's Measure} of $\mathcal{B}$ within $\Gamma$ can be computed in linear time (provided that the boxes have been pre-sorted in each dimension).

\begin{figure}
	\begin{center}
		\includegraphics[width=1\linewidth]{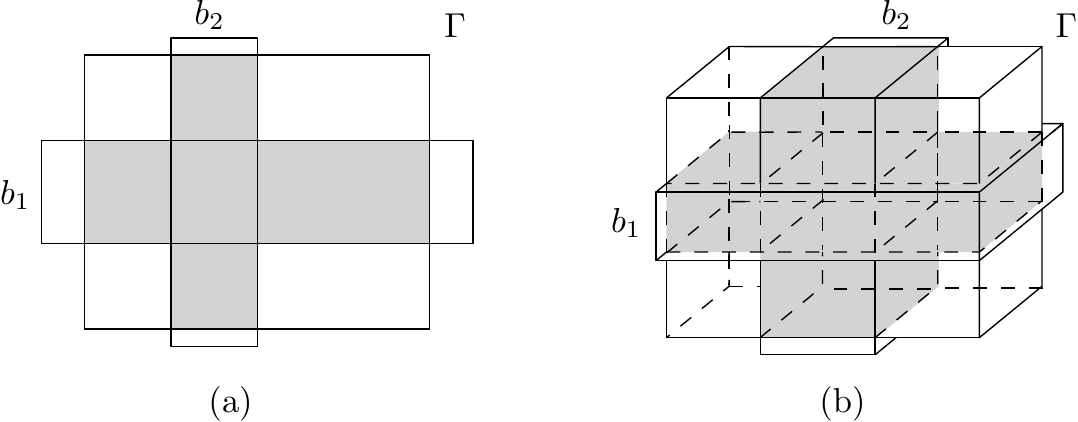}
	\end{center}   
	\caption{An illustration in dimensions 2 (a) and 3 (b) of two boxes $b_1, b_2$ equivalent to slabs when restricted to the box $\Gamma$.
		The \textsc{Klee's Measure} of $\{b_1,b_2\}$ within $\Gamma$ is the area (resp. volume) of the shadowed region in (a) (resp. (b)).} \label{fig:slabs}
\end{figure}

Overmars and Yap's algorithm~\cite{Overmars1991} is based on a technique originally described by Bentley~\cite{Bentley1977}: solve the static problem in  $d$ dimensions by combining a data structure for the dynamic version of the problem in $d-1$ dimensions with a plane sweep over the $d$-th dimension.
The algorithm starts by partitioning the space into $\bigo{n^{d/2}}$ rectangular cells such that the boxes in $\mathcal{B}$ are equivalent to slabs when restricted to each of those cells. 
Then, the algorithm builds a \emph{tree-like} data structure whose leaves are the cells of the partition, supporting insertion and deletion of boxes  while keeping track of the \textsc{Klee's Measure} of the boxes.

Chan's algorithm~\cite{Chan2013} is a simpler \emph{divide-and-conquer} algorithm, where the slabs are \emph{simplified} and removed from the input before the recursive calls (Chan~\cite{Chan2013} named this technique \emph{Simplify, Divide and Conquer}, \emph{SDC} for short).
To obtain the recursive subproblems, the algorithm assigns a constant weight of $2^{\frac{i+j}{2}}$ to each ($d$-2)-face intersecting the domain and orthogonal to the $i$-th and $j$-th dimensions, $i,j \in [1..d]$.
Then, the domain is partitioned into two sub-domains by the hyperplane $x_1 = m$, where $m$ is the weighted median of the ($d$-2)-faces orthogonal to the first dimension. This yields a decrease by a factor of $2^{2/d}$ in the total weight of the ($d$-2)-faces intersecting each sub-domain.
Chan~\cite{Chan2013}  uses this, and the fact that slabs have no ($d$-2)-face intersecting the domain, to prove that the SDC algorithm runs in time within $\bigo{n^{d/2}}$.
%

Unfortunately, there are sets of boxes which require partitions of the space into a number of cells within $\Omega(n^{d/2})$ to ensure that, when restricted to each cell, all the boxes are equivalent to slabs. 
Hence, without a radically new technique, any algorithm based on this approach will require running time within $\Omega(n^{d/2})$.
Chan~\cite{Chan2013} conjectured that any \emph{combinatorial} algorithm computing the \textsc{Klee's Measure} requires within $\Omega(n^{d/2})$ operations, via a reduction from the parameterized \textsc{k-Clique} problem, in the worst case over instances of fixed size~$n$. 
As a consequence, recent work have focused on the study of special cases which can be solved faster than $\Omega(n^{d/2})$, like for instance when all the boxes are \emph{orthants}~\cite{Chan2013}, \emph{$\alpha$-fat boxes}~\cite{bringmann2012}, or cubes~\cite{bringmann2013}.
In turn, we show in \Cref{sec:finer_analysis} that there are measures which gradually separate \emph{easy} instances for these problems from the \emph{hard} ones.

In the next section, we present an algorithm for the computation of the \textsc{Depth Distribution} inspired by a combination of the approaches  described above, outperforming naive applications of those techniques.

\section{Computing the Depth Distribution}
\label{sec:computing_depthdist}
We describe in  \Cref{sec:classical-worst-case} an algorithm to compute the \textsc{Depth Distribution} of a set of $n$ boxes.
The running time of this algorithm in the worst case over $d$-dimensional instances of fixed size $n$ is within $\bigo{n^{\frac{d+1}{2}}\log n}$, using space within $\bigo{n \log n}$.
This running time is worse than that of computing only the \textsc{Klee's Measure} (or the \textsc{Maximum Depth}) by a factor within $\bigo{\sqrt{n} \log n}$: we argue in \Cref{sec:lower_bounds} that computing the \textsc{Depth Distribution} is computationally harder than the special cases of computing the \textsc{Klee's Measure} and the \textsc{Maximum Depth}, unless computing \textsc{Matrix Multiplication} is much easier than usually assumed.
%

\subsection{Upper bound}\label{sec:classical-worst-case}
We introduce an algorithm to compute the \textsc{Depth Distribution} inspired by a combination of the techniques introduced  by Chan~\cite{Chan2013}, and by Overmars and Yap~\cite{Overmars1991}, for the computation of the \textsc{Klee's Measure} (described in \Cref{sec:preliminaries}).
As in those approaches, the algorithm partitions the domain $\Gamma$ into $\bigo{n^{d/2}}$ cells where the boxes of $\mathcal{B}$ are equivalent to slabs, and then  combines the solution within each cell to obtain the final answer. 
Two main issues must be addressed: how to compute the \textsc{Depth Distribution} when the boxes are slabs, and how to partition the domain efficiently. 

We address first the special case of slabs. We show in \Cref{lem:ddist_slabs} that computing the \textsc{Depth Distribution} of a set of $n$ $d$-dimensional slabs within a domain $\Gamma$ can be done via a multiplication of $d$ polynomials of degree at most $n$.

\begin{lemma}\label{lem:ddist_slabs}
	Let  $\mathcal{B}$ be a set of $n$ axis-parallel $d$-dimensional axis-aligned boxes
	whose intersection with a domain box $\Gamma$ are slabs.
	The computation of the \textsc{Depth Distribution} $\tuplelr{V_1, \ldots, V_n}$ of $\mathcal{B}$ within $\Gamma$ can be performed via a multiplication of $d$ polynomials of degree at most $n$.
\end{lemma}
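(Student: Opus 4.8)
The plan is to set up a coordinate system in which each slab is a constraint on a single axis, and then decouple the $d$ directions so that the depth at a point is a sum of $d$ independent contributions. Concretely, since every box $b \in \mathcal{B}$ restricted to $\Gamma$ is a slab, there is a dimension $\dim(b) \in [1..d]$ and an interval along that axis such that $b \cap \Gamma$ is exactly the set of points of $\Gamma$ whose $\dim(b)$-th coordinate lies in that interval. Partition $\mathcal{B}$ into groups $\mathcal{B}_1, \ldots, \mathcal{B}_d$ according to $\dim(b)$. First I would observe that the boxes in $\mathcal{B}_j$ cut the axis of dimension $j$ into at most $2|\mathcal{B}_j|+1$ elementary intervals, and that on the interior of each such elementary interval the number of boxes of $\mathcal{B}_j$ covering it is a constant $c$; let $\ell^{(j)}_c$ denote the total length (one-dimensional measure, clipped to $\Gamma$) of the elementary intervals of dimension $j$ covered exactly $c$ times. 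All these quantities are computable in linear time after a sort, by a standard sweep over the $2|\mathcal{B}_j|$ endpoints.

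Next I would encode each axis by a polynomial. Define $P_j(x) = \sum_{c \ge 0} \ell^{(j)}_c\, x^c$, a polynomial of degree at most $|\mathcal{B}_j| \le n$ whose coefficient of $x^c$ is the total length covered exactly $c$ times in dimension $j$. The key step is that $\Gamma$ decomposes into a grid of open axis-parallel cells, one for each choice of an elementary interval in each dimension; a point in the cell indexed by $(I_1, \ldots, I_d)$ is covered by exactly $c_1 + \cdots + c_d$ boxes of $\mathcal{B}$, where $c_j$ is the coverage count of $I_j$, because a slab in direction $j$ either contains the whole cell (iff its interval contains $I_j$) or is disjoint from it. The $d$-dimensional volume of that cell is the product $|I_1|\cdots|I_d|$ of the one-dimensional lengths (with lengths clipped to $\Gamma$). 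Summing volumes over all cells of total coverage $k$ therefore gives
\[
V_k \;=\; \sum_{c_1 + \cdots + c_d = k} \ \prod_{j=1}^{d} \ell^{(j)}_{c_j},
\]
which is exactly the coefficient of $x^k$ in the product $P_1(x) \cdot P_2(x) \cdots P_d(x)$. (The coefficient of $x^0$, corresponding to points of $\Gamma$ in no box, is discarded, consistent with the definition of the \textsc{Depth Distribution} indexed from $1$.) Thus computing $(V_1, \ldots, V_n)$ reduces to multiplying the $d$ polynomials $P_1, \ldots, P_d$, each of degree at most $n$, as claimed.

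The main obstacle, and the place to be careful, is the bookkeeping of clipping against $\Gamma$ and the handling of boundaries: one must fix the convention that coverage is measured on the \emph{interior} of cells so that lower-dimensional faces (which have measure zero) do not cause double counting, and one must make sure the lengths $\ell^{(j)}_c$ are taken within the projection of $\Gamma$ onto axis $j$ so that the product of one-dimensional lengths really equals the $d$-volume of the cell — this uses that $\Gamma$ is itself a box and hence its restriction to a grid cell is the product of the coordinate-wise restrictions. Everything else (the linear-time sweep to build each $P_j$, the degree bound $\deg P_j \le |\mathcal{B}_j| \le n$) is routine. I would close by noting that no claim about the cost of the polynomial multiplication is made here; Lemma~\ref{lem:ddist_slabs} only asserts the reduction, and the running time is deferred to where the full algorithm is assembled.
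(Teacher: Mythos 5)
Your proof is correct and follows essentially the same route as the paper's: you partition the slabs by their orthogonal dimension, encode the exactly-$c$-covered length in each dimension (including the uncovered length as the constant coefficient) as a polynomial, and read off $V_k$ as the coefficient of $x^k$ in the product. Your explicit treatment of the grid cells and measure-zero boundaries is a slightly more detailed justification of the paper's one-line decoupling argument, but the decomposition and the key convolution identity are identical.
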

\begin{proof}
	For all $i \in [1..d]$, let $\mathcal{B}_i$ be the subset of slabs that are orthogonal to the $i$-th dimension, and
	let $\tuplelr{V^i_1, \ldots, V^i_n}$ be the \textsc{Depth Distribution} of the intervals that result from projecting $\mathcal{B}_i$ to the $i$-th dimension within $\Gamma$. We associate a polynomial $P_i(x)$ of degree $n$ with each $\mathcal{B}_i$  as follows:
	\begin{itemize}
		\item let $\Gamma_i$ be the projection of the domain $\Gamma$ into the $i$-th dimension, and  
		\item let $V^i_0$ be the length of the region of $\Gamma_i$ not covered by a box in $B_i$ (i.e., $V^i_0= (|\Gamma_i| - \sum_{j=1}^{n}{V^i_j})$); then
		\item $P_i(x) = \sum_{j=0}^{n}{V^i_j} \cdot x^j$.
	\end{itemize}
	Since any slab entirely covers the domain in all the dimensions but the one to which it is orthogonal, any point $p$ has depth $k$ in $\mathcal{B}$ if and only if it has depth $j_1$ in $\mathcal{B}_1$, $j_2$ in $\mathcal{B}_2$, $\ldots,$ and $j_d$ in  $\mathcal{B}_d$, such that $j_1 + j_2 + \ldots + j_d = k$. Thus, for all $k \in [0..n]$:
	\begin{equation*}\label{eq:reduction}
	V_k = \sum_{\substack{0 \le j_1, \ldots, j_d \le n \\ j_1 + \ldots + j_d = k}}{\left(\prod_{i=1}^{d} V^i_{j_i}\right)},
	\end{equation*}
	which is precisely the $(k+1)$-th coefficient of $P_1(x) \cdot P_2(x) \cdot \ldots \cdot P_d(x)$. Thus,
	this product yields the \textsc{Depth Distribution} $\tuplelr{V_1, \ldots, V_n}$ of $\mathcal{B}$ in  $\Gamma$.
	\qed
\end{proof}

\begin{INUTILE}
\begin{lemma}
Let  $\mathcal{B}$ be a set of $n$ axis-parallel $d$-dimensional axis aligned boxes, with $d \ge 2$, that are equivalent to slabs when restricted to a domain box $\Gamma$.
Then, the problem of computing the \textsc{Depth Distribution} of $\mathcal{B}$ is equivalent to the problem of multiplying $d$ polynomials of degree at most $n$.
\end{lemma}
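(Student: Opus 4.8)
The forward direction is exactly \Cref{lem:ddist_slabs}: the Depth Distribution of a set of $n$ slabs inside a box $\Gamma$ is read off, in linear time once the slabs are sorted along each axis, from the coefficient vector of a product of $d$ polynomials whose degrees sum to at most $n$ (hence each of degree at most $n$). So the plan is to establish the converse, namely that given $d$ polynomials $P_1,\dots,P_d$ of degree at most $n$ one can build, in linear time, an instance of \textsc{Depth Distribution} on $\bigo{n}$ slabs from whose solution the coefficients of $\prod_{i} P_i$ are recovered in linear time. Throughout, $d$ is a fixed constant, so constant-factor (in particular $2^d$-factor) blow-ups in the number of boxes or of subproblems are harmless; "equivalent" is to be read as mutual linear-time reducibility.

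First I would reduce to polynomials with nonnegative coefficients. Writing $P_i = P_i^{+} - P_i^{-}$, where $P_i^{+}$ and $P_i^{-}$ collect the positive and the negated-negative coefficients respectively, we get $\prod_{i=1}^{d} P_i = \sum_{S \subseteq [1..d]} (-1)^{|S|} \big(\prod_{i \in S} P_i^{-}\big)\big(\prod_{i \notin S} P_i^{+}\big)$, a signed sum of $2^{d} = \bigo{1}$ products of nonnegative-coefficient polynomials; the output coefficients are then obtained by $\bigo{1}$ coefficient-wise additions and subtractions. (If integer or rational data is required one first scales by a common denominator.)

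Next, for a single polynomial $Q(x) = \sum_{j=0}^{m} c_j x^j$ with all $c_j \ge 0$ and $m \le n$, I would realize $Q$ as the depth-distribution polynomial of $m$ intervals on a segment via a nested \emph{staircase} gadget: on $\Gamma_i = [0, T_i]$ with $T_i = \sum_{j} c_j = Q(1)$, cut consecutive sub-segments $S_0, S_1, \dots, S_m$ of lengths $c_0, c_1, \dots, c_m$ from left to right, and for $k = 1, \dots, m$ let the $k$-th interval be $[\ell_k, T_i]$, where $\ell_k = \sum_{j=0}^{k-1} c_j$ is the left endpoint of $S_k$. A point in $S_j$ lies in exactly the intervals indexed $1,\dots,j$, so the depth-$j$ region has length $c_j$ and the depth-distribution polynomial of the gadget in $\Gamma_i$ is precisely $Q$ (empty sub-segments from $c_j=0$ are harmless). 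Embedding the $i$-th gadget as slabs orthogonal to the $i$-th axis inside $\Gamma = \Gamma_1 \times \cdots \times \Gamma_d$ (each slab spanning $\Gamma$ fully in every other coordinate), the argument in the proof of \Cref{lem:ddist_slabs} shows that the Depth Distribution of the union of the $d$ gadgets realizing a fixed sign pattern $\varepsilon \in \{+,-\}^d$ is the coefficient vector of $P_1^{\varepsilon_1}\cdots P_d^{\varepsilon_d}$ (with its constant term recovered as the uncovered volume $|\Gamma| - \sum_{k\ge 1} V_k$, exactly as $V^i_0$ is handled there). Running this over all $2^d$ patterns and recombining yields $\prod_i P_i$.

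The main obstacle — indeed essentially the only one — is the bookkeeping of degrees and box counts: a product of $d$ polynomials each of degree $n$ has degree up to $dn$, so the slab instance has up to $dn$ slabs and a depth vector of up to $dn$ nonzero entries, whereas \Cref{lem:ddist_slabs} produces a product of total degree at most $n$ from $n$ slabs. All these discrepancies are bounded by the constant $d$ and so vanish under the "up to constant factors" reading; making the equivalence literal requires either stating that caveat explicitly or restricting the $d$ input polynomials to have degrees summing to at most $n$. One should also dispose of the degenerate cases $Q \equiv 0$ and $c_j = 0$, which are immediate.
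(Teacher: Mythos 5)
Your proposal is correct and follows essentially the same route as the paper's own proof: the forward direction is \Cref{lem:ddist_slabs}, and the converse builds one nested-slab ``staircase'' gadget per polynomial, orthogonal to its own axis, with segment lengths equal to the coefficients, so that the product's coefficients are read off the Depth Distribution (the paper uses prefix slabs anchored at $0$, which outputs the coefficients in reversed order, while your suffix intervals give them in the right order directly). Your explicit $2^d$ sign-splitting for negative coefficients and the degree/box-count bookkeeping are careful additions the paper glosses over, but they do not change the argument.
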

\begin{proof}
To prove that the problems are equivalent we show that \textsc{Depth Distribution} is a subproblem of \textsc{Polynomial Multiplication}, and \emph{vice versa}.
\\
\\
\textsc{Depth Distribution} $\subseteq$ \textsc{Polynomial Multiplication}:\\
For all $i$ in $[1..n]$ let:
\begin{itemize}
\item $\mathcal{B}_i$ be the slabs that are orthogonal $x_i$;
\item $\tuplelr{V^i_1, \ldots, V^i_n}$ be the \textsc{Depth Distribution} of the projection of $\mathcal{B}_i$ to $x_i$;
\item $|\Gamma_i|$ denote the length of $\Gamma$ when projected to dimension $x_i$;
\item $V^i_0 = (|\Gamma|_i - \sum_{j=1}^{n}{V^i_j})$; and
\item $P_i(x) = \sum_{j=0}^{n}{V^i_j} \cdot x^j$  be a polynomial of degree $n$.
\end{itemize}

Any slab covers the entire domain in all the dimensions except the one to which it is orthogonal.
Therefore, any point $p$ has depth $k$ in $\mathcal{B}$ if it has depth $j_1,\ldots,j_d$ in  $\mathcal{B}_1,\ldots,\mathcal{B}_d$, respectively, such that $j_1 + \ldots + j_d = k$.
Thus, for all $k=[0..n]$:
\begin{equation}
V_k = \sum_{\substack{0 \le j_1, \ldots, j_d \le n \\ j_1 + \ldots + j_d = k}}{\left(\prod_{i=1}^{d} V^i_{j_i}\right)},
\end{equation}
which is precisely the $k$-th coefficient of $P_1(x) \times \ldots \times P_d(x)$.
So computing the product yields the \textsc{Depth Distribution}.
\\\\
\textsc{Polynomial Multiplication} $\subseteq$ \textsc{Depth Distribution}:\\
Let $P_i(x) = \sum_{j=0}^{n}{a^i_j} x^j$, for $i=[1..n]$ be $d$ polynomials of degree $n$.
We show that there is a set of boxes $\mathcal{B}$, such that from the \textsc{Depth Distribution} of $\mathcal{B}$, one obtains with little extra effort the coefficients of $P_1(x) \times \ldots \times P_d(x)$.

\begin{figure}[t]
	\begin{minipage}[c]{.6\textwidth}
		\centering
		\includegraphics[scale=.9]{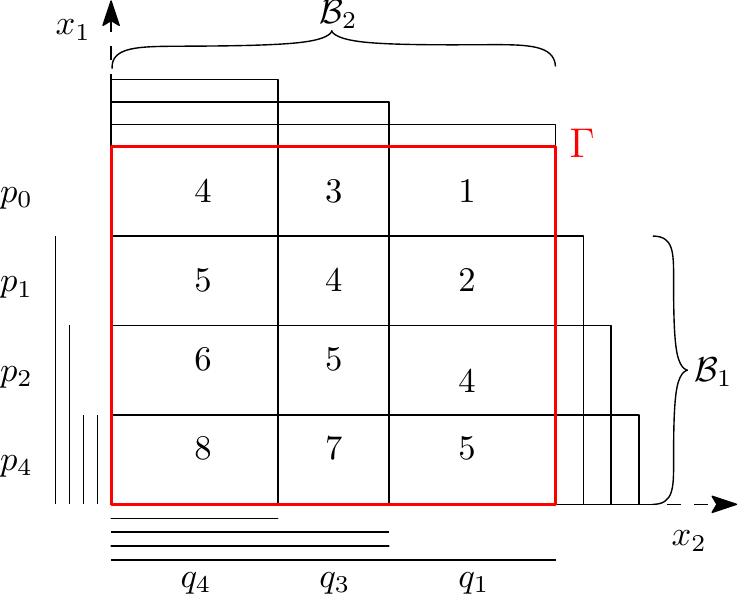}
	\end{minipage}\hfill
	\begin{minipage}[c]{.4\textwidth}
		\captionof{figure}{Representation of the product $P(n) \times Q(n)$ as an instance of \textsc{Depth Distribution}, for $P(n)=p_4  n^4 + p_2  n^2 + p_1 n + p_0 $, and $Q(n)=q_4  n^4 + q_3  n^3 + q_1 n $.$\mathcal{B}_1'$ and $\mathcal{B}_2'$ represent the sets of projections of the boxes created for $P$ and $Q$ to dimensions $x_1$ and $x_2$, respectively.
			The number within each cell indicates the depth of the corresponding region.}
		\label{fig:reduction}
	\end{minipage}
\end{figure}

We construct a domain box $\Gamma$, and for the $i$-th polynomial, we create a set of slabs within $\Gamma$ orthogonal to the $i$-th dimension, as follows (see \Cref{fig:reduction} for an illustration):
\begin{itemize}
	\item $\Gamma = \{(x_1, \ldots, x_d) \mid 0 \le x_1 \le \sum_{j=0}^{n}{a^1_j} \wedge \ldots \wedge 0 \le x_d \le \sum_{j=0}^{n}{a^d_j}\} $
	(in the $i$-th dimension, $\Gamma$ extends from zero to the sum of the coefficients of $P_i$, for $i=[1..d]$);
	
	\item For all $i=[1..d]$, $\mathcal{B}_i = \bigcup_{j=1}^{n} \left(\Gamma \cap \{(x_1, \ldots, x_d) \in \Gamma \mid 0 \le x_i \le \sum_{k=0}^{j}{a^i_k} \}\right)$ ($\mathcal{B}_i$ has $n$ boxes such that the length of the region in the $i$-th coordinate covered by exactly $k$ boxes is $a^i_k$, for $k=[0..n]$); and
	
	\item $\mathcal{B} = \bigcup_i^d \mathcal{B}_i$ ($\mathcal{B}$ is the set of $d \cdot n$ boxes resulting from the union of $\mathcal{B}_i$, for $i=[1..d]$).
\end{itemize}

When constructed as above, the \textsc{Depth Distribution} of the one dimensional projection of $\mathcal{B}_i$ to the $i$-th dimension is precisely $\tuple{V^i_1=a^i_1, \ldots, V^i_n=a^i_n}$ for all $i=[1..d]$, and the \textsc{Depth Distribution} of $\mathcal{B}$ is given by \Cref{eq:reduction}.
Thus, computing the \textsc{Depth Distribution} of $\mathcal{B}$ yield the first $(nd-1)$ coefficients of $P_1(x) \times \ldots \times P_n(x)$, in descending order of the exponents.
Finally, the last coefficient of the product (i.e. the constant coefficient) is given by $|\Gamma| - \sum_{i=1}^{n}{V_i}$, where $|\Gamma|$ denotes the volume of $\Gamma$.
\qed
\end{proof}
\end{INUTILE}

Using standard \emph{Fast Fourier Transform} techniques, two polynomials can be multiplied in time within $O(n \log n)$~\cite{CormenLRS-ItoA-2009}.
Moreover, the \textsc{Depth Distribution} of a set of intervals (i.e., when $d=1$) can be computed in linear time after sorting, by a simple scan-line algorithm, as for the \textsc{Klee's Measure}~\cite{Chan2013}.
Thus, as a consequence of Lemma~\ref{lem:ddist_slabs}, when the boxes in $\mathcal{B}$  are slabs when restricted to a domain box $\Gamma$, the \textsc{Depth Distribution} of $\mathcal{B}$ within $\Gamma$ can be computed in time within $\bigo{n \log n}$. 

\begin{corollary}\label{col:ddist_slabs}
  Let $\mathcal{B}$ be a set of $n$ $d$-dimensional axis aligned boxes whose intersections with a $d$-dimensional box $\Gamma$ are slabs. The \textsc{Depth Distribution} of $\mathcal{B}$ inside $\Gamma$ can be computed in time within $\bigo{n \log n}$.
\end{corollary}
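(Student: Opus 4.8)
The plan is to turn \Cref{lem:ddist_slabs} into a running-time bound by showing that all the polynomials it involves, and their product, can be handled in time within $\bigo{n\log n}$. First, for each dimension $i\in[1..d]$ I would build the polynomial $P_i(x)=\sum_{j=0}^{n}V^i_j\,x^j$ of \Cref{lem:ddist_slabs}. Its coefficients are the one-dimensional \textsc{Depth Distribution} of the intervals obtained by projecting $\mathcal{B}_i$ onto the $i$-th axis (together with the slack term $V^i_0$), which can be computed by sorting the at most $2n$ interval endpoints and then sweeping a point along the axis while maintaining the number of intervals currently covering it, accumulating into $V^i_j$ the total length of the stretches on which this count equals $j$ --- exactly the scan-line algorithm used for the \textsc{Klee's Measure} in one dimension. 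This costs $\bigo{n\log n}$ for the sort and $\bigo{n}$ for the sweep; since $d$ is a fixed constant, producing all $d$ polynomials costs $\bigo{n\log n}$ in total.

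Next I would compute the product $P_1(x)\cdots P_d(x)$. Because $\mathcal{B}$ has only $n$ boxes, no point of $\Gamma$ has depth exceeding $n$, so only the coefficients of $x^0,\ldots,x^n$ of this product carry information, and it suffices to compute the product modulo $x^{n+1}$. Reducing modulo $x^{n+1}$ is a ring homomorphism, so I may multiply the $P_i$ one pair at a time and truncate to degree $n$ after each multiplication; every operand then remains a polynomial of degree at most $n$, and the $d-1$ pairwise multiplications are each a multiplication of two polynomials of degree at most $n$, costing $\bigo{n\log n}$ by the standard FFT-based algorithm~\cite{CormenLRS-ItoA-2009}. With $d$ constant, this is $\bigo{n\log n}$ overall.

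Reading the coefficients of $x^1,\ldots,x^n$ off the resulting polynomial yields $\tuplelr{V_1,\ldots,V_n}$ by \Cref{lem:ddist_slabs}, within the claimed time bound. I do not foresee a real obstacle here; the only point deserving a line of justification is the truncation, which is harmless precisely because a term of degree greater than $n$ in any partial product can contribute only to terms of degree greater than $n$ in all subsequent products, so no coefficient of degree at most $n$ of the full product is affected.
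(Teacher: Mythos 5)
Your proposal is correct and follows essentially the same route as the paper: compute each one-dimensional \textsc{Depth Distribution} by sorting and a scan-line sweep, then multiply the $d$ polynomials of \Cref{lem:ddist_slabs} with FFT-based multiplication, all within $\bigo{n\log n}$ for constant $d$. Your truncation modulo $x^{n+1}$ is a harmless extra precaution (iterated multiplication without truncation already stays within $\bigo{n\log n}$ since $d$ is constant), so it does not change the argument in any essential way.
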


\begin{INUTILE}
With arguments similar to the ones used in the proof of \Cref{lem:ddist_slabs} one can show that multiplying two polynomials of degree $n$ can be reduced in linear time to an instance of the computation of the \textsc{Depth Distribution} of a set of $\bigo{n}$ boxes. Hence, the running time of \Cref{col:ddist_slabs} is tight.	
\end{INUTILE}

A naive application of previous techniques~\cite{Chan2013,Overmars1991} to the computation of \textsc{Depth Distribution} yields poor results. Combining the result in \Cref{col:ddist_slabs} with the partition of the space and the data structure described by Overmars and Yap~\cite{Overmars1991} yields an algorithm to compute the \textsc{Depth Distribution} running in time within  $\bigo{n^{\frac{d+1}{2}}\log n}$, and using space within $\bigo{n^{d/2}\log n}$.
Similarly, if the result in \Cref{col:ddist_slabs} is combined with Chan's partition of the space~\cite{Chan2013}, one obtains an algorithm using space linear in the number of boxes,  but running in time within $\bigo{n^{\frac{d}{2}+1}\log n}$ (i.e., paying an extra $\bigo{n^\frac{1}{2}}$-factor for the reduction in space usage of~Overmars and Yap~\cite{Overmars1991}). 


We combine these two approaches into an algorithm which achieves the best features of both: it runs in time within $\bigo{n^\frac{d+1}{2}\log n}$, and uses $\bigo{n \log n}$-space. As in Chan's approach~\cite{Chan2013} we use a divide and conquer algorithm, but we show in \Cref{theo:ddist_comp_spacey} that the running time is asymptotically the same as if using the partition and data structures described by Overmars and Yap~\cite{Overmars1991}  (see \Cref{alg:ddistribution-inner} for a detailed description). 

\begin{INUTILE}
		We show how to transform any set of $n$ boxes $\mathcal{B}$ into a $\mathcal{B'}$ with $\bigo{n}$ boxes, such that any rectangular region which does not intersect a ($d$-2)-face of a box in $\mathcal{B'}$, partially intersects (i.e., intersects but does not contain) at most $\bigo{\sqrt{n}}$ boxes of $\mathcal{B}$. 
		
		\begin{lemma}
			Let  $\mathcal{B}$ be a set of $n$ axis-parallel boxes in $\mathbb{R}^d$, and let $\Gamma$ be a $d$-dimensional domain box.
			There exists a set $\mathcal{B}'$  such that:
			\begin{enumerate}[a)]
				\item{the \textsc{Depth Distribution} of $\mathcal{B}'$ and $\mathcal{B}$ within $\Gamma$ are the same;}
				\item{the size of $\mathcal{B}'$ is within $\bigo{n}$}
				\item{any box $\gamma$ which does not intersect a ($d$-2)-face of some box in $\mathcal{B}'$, partially intersects more than $\sqrt{n}$ boxes from $\mathcal{B}'$ in at most one dimension.}
			\end{enumerate}
		\end{lemma}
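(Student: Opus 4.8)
The plan is to carry out, inside Chan's \emph{simplify} step, a construction equivalent to the Overmars and Yap grid: replace every box of $\mathcal{B}$ by a grid-aligned \emph{core} together with a constant number of thin \emph{fringe} boxes. Concretely, for each dimension $i$ I would sort the $2n$ coordinates of the facets of $\mathcal{B}$ orthogonal to dimension $i$ (together with the two extreme coordinates of $\Gamma$) and mark every $\lceil\sqrt{n}\rceil$-th of them as a \emph{grid value} $g^i_1<\dots<g^i_{m_i}$, so that $m_i=\bigo{\sqrt{n}}$ and fewer than $\sqrt{n}$ facet coordinates lie strictly between two consecutive grid values. For a box $b=\prod_{i=1}^d[\ell_i,r_i]$, let $\alpha_i$ be the least grid value of dimension $i$ that is at least $\ell_i$ and $\beta_i$ the greatest one that is at most $r_i$ (keeping $b$ unchanged when one of them does not exist); peeling one dimension at a time, write $b$ as the disjoint union of its core $\prod_{i=1}^d[\alpha_i,\beta_i]$ --- all of whose facets lie on grid hyperplanes --- and $2d$ fringe boxes, the one peeling dimension $j$ spanning a single grid gap in dimension $j$ and being already grid-aligned in dimensions $1,\dots,j-1$. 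Let $\mathcal{B}'$ be the collection of all these cores and fringes.

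Properties (a) and (b) are then immediate: the core and fringes of $b$ tile $b$ exactly, so every point of $\Gamma$ lies in equally many boxes of $\mathcal{B}'$ as of $\mathcal{B}$, preserving the depth distribution, and $|\mathcal{B}'|\le(2d+1)n=\bigo{n}$. For property (c), let $\gamma$ be a box whose interior meets no $(d-2)$-face of a box of $\mathcal{B}'$. I would first show that $\gamma$ partially intersects any single $b'\in\mathcal{B}'$ in at most one dimension: if $\gamma$ crossed facets of $b'$ orthogonal to two distinct dimensions $i$ and $j$, then choosing coordinates in the nonempty overlaps $\gamma_k\cap[\ell_k(b'),r_k(b')]$ for $k\ne i,j$ would exhibit a point of $\gamma$ lying on the $(d-2)$-face of $b'$ that fixes those two facets --- a contradiction. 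Hence the boxes that $\gamma$ partially intersects split, according to their responsible dimension, into disjoint groups $\mathcal{S}_1,\dots,\mathcal{S}_d$. If the projection $\gamma_i$ of $\gamma$ contains no grid value of dimension $i$ in its interior, then it lies inside a single grid gap and hence contains fewer than $\sqrt{n}$ facet coordinates of $\mathcal{B}$ and no grid value, so $|\mathcal{S}_i|\le\sqrt{n}$; therefore $|\mathcal{S}_i|>\sqrt{n}$ forces $\gamma_i$ to span a full grid gap $[g^i_k,g^i_{k+1}]$.

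What remains --- and what I expect to be the main obstacle --- is to show that this can happen for at most one dimension: a box $\gamma$ whose projections span a full grid gap in two dimensions $i$ and $j$ should be forced to enclose a $(d-2)$-face of some core box, contradicting the choice of $\gamma$. In general position this is clean --- each grid hyperplane $x_i=g^i_k$ is a facet of a core box (because $g^i_k$ is an original facet coordinate of $\mathcal{B}$), and one can follow that box's transverse extent to see one of its $(d-2)$-faces cross $\gamma$ --- but the argument must survive degeneracies such as many boxes sharing a facet coordinate, cores that collapse in some dimension, or the two grid hyperplanes coming from distinct cores. I would handle these by a symbolic-perturbation reduction that makes all coordinates distinct and all cores full-dimensional without altering the depth distribution, possibly together with $\bigo{\sqrt{n}}$ auxiliary grid-aligned boxes of vanishing volume to witness the grid hyperplanes; reducing property (c) to the general-position case is the delicate part, while the tiling identity for (a), the count for (b), and the single-box claim used for (c) are routine.
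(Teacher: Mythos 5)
The gap is exactly where you flag it, and neither of your proposed remedies closes it. Your cores and fringes correctly guarantee that a dimension $i$ with $|\mathcal{S}_i|>\sqrt{n}$ forces $\gamma_i$ to contain a grid value of dimension $i$ in its interior, but nothing in $\mathcal{B}'$ then forbids this from happening in two dimensions at once. The $(d-2)$-faces of a core $c$ lying on the hyperplane $x_i=g^i_k$ sit at $\{x_i=g^i_k,\ x_j=\alpha_j(c)\text{ or }\beta_j(c)\}$ and extend only over $\prod_{m\neq i,j}[\alpha_m(c),\beta_m(c)]$; there is no reason for $\alpha_j(c)$ or $\beta_j(c)$ to fall inside $\gamma_j$, nor for the transverse extent to meet $\gamma$, so ``following the box's transverse extent'' fails even in general position. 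Concretely, in $d=3$ take one family of $\sqrt{n}$ boxes whose $x_1$-facets cluster just above a grid value $g^1_k$ and which fully span $\gamma$ in dimensions $2$ and $3$, and a second family of $\sqrt{n}$ boxes whose $x_2$-facets cluster just above $g^2_l$ and which fully span $\gamma$ in dimensions $1$ and $3$: a small $\gamma$ around $(g^1_k,g^2_l)$ meets no edge of any core or fringe, yet partially intersects more than $\sqrt{n}$ boxes in each of two dimensions. Symbolic perturbation cannot create the missing faces, and $\bigo{\sqrt{n}}$ auxiliary boxes are an order of magnitude too few: you need a witness at every pair of grid values, i.e.\ $\Theta(\sqrt{n}\cdot\sqrt{n})=\Theta(n)$ of them per pair of dimensions, each spanning the whole domain in the remaining $d-2$ coordinates.

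That observation is in fact the paper's entire proof, and it makes the core/fringe decomposition unnecessary. The paper keeps $\mathcal{B}'\supseteq\mathcal{B}$ unchanged and, for every pair of dimensions $(i,j)$ and every pair $(\alpha,\beta)$ of selected endpoints (every $\bigo{\sqrt{n}}$-th endpoint in sorted order in each dimension), adds the zero-volume degenerate box $\{(x_1,\ldots,x_d)\in\Gamma \mid x_i=\alpha,\ x_j=\beta\}$. These $\bigo{n}$ boxes have volume zero, so property (a) is immediate without any tiling argument; each such box is itself a $(d-2)$-face spanning the full domain in the other dimensions, so any $\gamma$ avoiding all $(d-2)$-faces can contain a selected endpoint in its interior in at most one dimension, and in every other dimension $\gamma$ lies inside a single gap and hence meets fewer than $\sqrt{n}$ facets of the original boxes. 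I would recommend you either adopt this construction outright or graft these $\Theta(n)$ full-domain crossing witnesses onto your decomposition; the rest of your argument (the single-box claim via $(d-2)$-faces, the partition into $\mathcal{S}_1,\ldots,\mathcal{S}_d$, and the counting within a gap) is sound up to constant factors in the grid spacing.
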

		
		\begin{proof}
			Initially, let $\mathcal{B}'=\mathcal{B}$. We add to $\mathcal{B'}$ \emph{empty} boxes of the form $\{(x_1, \ldots, x_d) \in \Gamma \mid x_i = \alpha, x_j=\beta\}$ for some $i,j \in [1..d]$, and some endpoints $\alpha, \beta$ of the boxes of $\mathcal{B}$. 
			For this, for every pair of dimensions $i,j$, let $I$ (resp. $J$) be the set of endpoints at positions $\frac{\sqrt{n}}{d}, \frac{2\sqrt{n}}{d}, \ldots, \frac{d\sqrt{n}\sqrt{n}}{d}$  in the $i$-th dimension (resp. the $j$-dimension) in ascending order. For  each pair $(\alpha \in I, \beta \in J)$ we add the box $\{(x_1, \ldots, x_d) \in \Gamma \mid x_i = \alpha, x_j=\beta\}$ to $\mathcal{B'}$. 
			As the boxes added to $\mathcal{B}'$ are empty (i.e., their volume is zero), the \textsc{Depth Distribution} of $\mathcal{B}'$ is the same as that of $\mathcal{B}$. The total number of boxes added for each pair of dimensions is $2d\sqrt{n} \times 2d\sqrt{n} = 4d^2 n$ boxes, and because the number of distinct pairs of dimensions is constant, the size of $\mathcal{B}'$ is within $\bigo{n}$. The third assertion derives directly from the construction.
			\qed
		\end{proof}
		
		The result in \Cref{lem:ddist_slabs} allows to use the partition scheme introduced by Chan~\cite{Chan2013}, while obtaining the nice properties yielded by the partition scheme introduced by Overmars and Yap~\cite{Overmars1991} (which is key to the algorithm we describe below for computing the \textsc{Depth Distribution} of a set of boxes).
\end{INUTILE}

\begin{algorithm}
	\caption{\texttt{SDC-DDistribution}($\mathcal{B}, \Gamma, c, \tuplelr{V_1, \ldots, V_n}$)}
		\label{alg:ddistribution-inner}          
	\begin{algorithmic}[1]
		\Require A set $\mathcal{B}$ of $n$ boxes in $\mathbb{R}^d$; a $d$-dimensional domain box $\Gamma$; the number $c$ of boxes not in $\mathcal{B}$ but in the original set that completely contain $\Gamma$; and a vector $\tuplelr{V_1, \ldots, V_n}$ representing the \textsc{Depth Distribution} computed so far.
		\If {no box in $\mathcal{B}$ has a ($d$-2)-face intersecting $\Gamma$ (i.e., all the boxes are slabs)} \label{step:basecase_slabs}
		\State Compute the \textsc{Depth Distribution} $\tuplelr{V'_1, \ldots, V'_{|\mathcal{B}|}}$ of $\mathcal{B}$ within $\Gamma$ using \Cref{lem:ddist_slabs}
		\For {$i \in [1..|\mathcal{B}|]$}
		\State $V_{i+c} \gets V_{i+c} +  V'_{i} $
		\EndFor
		\Else 
		\State Let 	$\mathcal{B}^0 \subseteq \mathcal{B}$  be the subset of boxes completely containing $\Gamma$
		\State $c \gets c + |\mathcal{B}^0|$
		\State Let $\mathcal{B}' = \mathcal{B} \setminus \mathcal{B}^0$
		\State Let $m$ be the weighted median of the ($d$-2)-faces orthogonal to $x_1$;\label{step:partition_1}
		\State Split $\Gamma$ into $\Gamma_L, \Gamma_R$ by the hyperplane $x_1 = m$;
		\State Rename the dimensions so that $x_1, \ldots, x_d$ becomes $x_2, \ldots, x_d, x_1$;\label{step:partition_3}
		\State Let $\mathcal{B}_L$ and $\mathcal{B}_R$ be the subsets of  $\mathcal{B}' $ intersecting $\Gamma_L$ and $\Gamma_R$ respectively;
		\State Call \texttt{SDC-DDistribution}($\mathcal{B_L}, \Gamma_L, c, \tuplelr{V_1, \ldots, V_n}$)
		\State Call \texttt{SDC-DDistribution}($\mathcal{B_R}, \Gamma_R, c, \tuplelr{V_1, \ldots, V_n}$)
		\EndIf
	\end{algorithmic}
\end{algorithm}

\begin{theorem}\label{theo:ddist_comp_spacey} 
Let  $\mathcal{B}$ be a set of $n$ axis-parallel boxes in $\mathbb{R}^d$.
The \textsc{Depth Distribution} of $\mathcal{B}$ can be computed in time within $\bigo{n^\frac{d+1}{2}\log n}$, using space within $\bigo{n\log n}$.
\end{theorem}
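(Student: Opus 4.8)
The plan is to analyze \Cref{alg:ddistribution-inner} (after a preprocessing step), splitting the argument into correctness of the recursion, a preprocessing reduction, the recursion‑tree bound, the per‑leaf cost, and the space bound. For correctness I would first establish the invariant that, whenever \texttt{SDC-DDistribution}$(\mathcal B,\Gamma,c,(V_1,\ldots,V_n))$ is called, $\Gamma$ is a sub‑box of the root domain, $c$ is the number of original boxes that contain $\Gamma$ but are absent from $\mathcal B$, and $\mathcal B$ is exactly the set of original boxes meeting $\Gamma$ that are not among those $c$. This is preserved by the split step (a box containing $\Gamma$ contains both $\Gamma_L$ and $\Gamma_R$ and is moved into $c$; a box meeting $\Gamma_L$ without containing $\Gamma$ stays in $\mathcal B_L$; the cyclic renaming of coordinates is a harmless relabeling). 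Hence for every $p\in\Gamma$ the depth of $p$ in the original input is $c$ plus its depth in $\mathcal B$, so the vector produced at a slab‑leaf by \Cref{lem:ddist_slabs} must be added into $(V_1,\ldots,V_n)$ with offset $c$, which is what the algorithm does — and a box that still lies in $\mathcal B$ at a leaf while containing $\Gamma$ is trivially a slab within $\Gamma$, so \Cref{lem:ddist_slabs} charges it the correct $+1$ shift and needs no special handling. Since the leaf domains partition the root domain up to a null set, summing the contributions yields exactly $V_k=\mathrm{vol}\{p:\text{depth}(p)=k\}$.

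A naive use of the slab base case is too slow: a slab spanning the root domain in all but one coordinate is a slab inside \emph{every} recursion cell and would be re‑processed $\Omega(n^{d/2})$ times. So I would first replace $\mathcal B$ by a set $\mathcal B'$ of size $\bigo n$ with the same \textsc{Depth Distribution}, obtained by adding $\bigo n$ zero‑volume boxes (degenerate $(d{-}2)$‑faces) at evenly spaced coordinates in every pair of dimensions; mimicking the Overmars–Yap partition, this ensures that any box meeting no $(d{-}2)$‑face of $\mathcal B'$ is crossed by more than $\sqrt n$ boxes of $\mathcal B'$ in at most one coordinate direction. On the transformed instance I would invoke Chan's weighting scheme verbatim: weight $2^{(i+j)/2}$ on each $(d{-}2)$‑face orthogonal to dimensions $i,j$ meeting the current domain, split at the weighted median in the current first coordinate, and rotate coordinate names after each split. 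Chan's argument then gives a recursion of depth $\bigo{\log n}$ with at most $\bigo{n^{d/2}}$ leaves, with the per‑node bookkeeping (weighted median after a single $\bigo{n\log n}$ presort of the endpoints, list splitting, moving containing boxes into $c$) summing to $\bigo{n^{d/2}}$ over all internal nodes.

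For the per‑leaf cost, at a leaf all remaining boxes are slabs within $\Gamma_\ell$; group them by orthogonal direction, $\mathcal B_\ell=\mathcal B_\ell^1\cup\cdots\cup\mathcal B_\ell^d$. By the tameness property all but at most one group has size $\bigo{\sqrt n}$; for those, \Cref{lem:ddist_slabs} with FFT multiplication yields a polynomial of degree $\bigo{\sqrt n}$ in time $\bigo{\sqrt n\log n}$, merged into $(V_1,\ldots,V_n)$ in $\bigo{\sqrt n}$, which over $\bigo{n^{d/2}}$ leaves gives $\bigo{n^{(d+1)/2}\log n}$. The delicate point — and what I expect to be the main obstacle — is the exceptional group $\mathcal B_\ell^{i^\ast}$, whose degree‑$\Theta(n)$ polynomial, if re‑multiplied at every leaf, would cost $\bigo{n^{d/2+1}\log n}$. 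The remedy I would pursue rests on the observation that a box already a slab orthogonal to a direction inside a node stays a slab orthogonal to that \emph{same} direction in all descendants: such boxes can be pulled out of the recursion as soon as they appear, their one‑dimensional contributions accumulated into a running product of $d$ polynomials carried down each root‑to‑leaf path, so that only the polynomial attached to the \emph{current} split direction is ever re‑clipped to a sub‑interval; since the split direction cycles, each of the $d$ polynomials is re‑clipped at most once every $d$ levels, done incrementally on the presorted structure. The technical heart of the write‑up is then to show that the total cost of these re‑clippings over the whole tree is again $\bigo{n^{(d+1)/2}\log n}$ — i.e. that this scheme genuinely matches the running time the Overmars–Yap data structure would give.

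Finally, for space: the recursion has depth $\bigo{\log n}$; each stack frame stores a box list and the $d$ running polynomials, each of size $\bigo n$, while the output vector $(V_1,\ldots,V_n)$ is threaded through the recursion rather than copied. Hence the working space is $\bigo{n\log n}$, plus $\bigo n$ for the output, i.e. $\bigo{n\log n}$ overall, as claimed in \Cref{theo:ddist_comp_spacey}.
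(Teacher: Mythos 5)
Your correctness invariant, the recursion-depth bound, and the space accounting all match the paper's argument. The genuine gap is in the leaf-cost analysis, where you misdiagnose the difficulty and then leave your repair unproven. You claim that a box which is a slab already in the root domain ``would be re-processed $\Omega(n^{d/2})$ times''; this is false for the partition the algorithm actually builds. A box appears in the list $\mathcal{B}$ at a leaf only if one of its $2d$ many $(d{-}1)$-faces intersects that leaf's cell (otherwise it either misses the cell and is dropped when forming $\mathcal{B}_L,\mathcal{B}_R$, or contains the cell and is absorbed into $c$). Because the split dimension is cycled in steps \ref{step:partition_1}--\ref{step:partition_3}, a $(d{-}1)$-face orthogonal to $x_i$ is \emph{not} duplicated at the one level out of every $d$ at which the split hyperplane is orthogonal to $x_i$, so it reaches at most $2^{\floor{h/d}(d-1)} \in O(n^{(d-1)/2})$ of the $O(n^{d/2})$ leaves, where $h$ is within a constant of $\frac{d}{2}\log n$. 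Summing over the $O(n)$ faces of the input gives $\sum_{\ell} n_\ell \in O(n^{(d+1)/2})$ for the leaf sizes $n_\ell$, hence a total leaf cost within $O(\sum_{\ell} n_\ell \log n_\ell) \subseteq O(n^{(d+1)/2}\log n)$ even if every leaf naively re-runs the polynomial multiplication of \Cref{lem:ddist_slabs} on all of its slabs. This aggregate Overmars--Yap-style count is the paper's entire leaf analysis: no per-leaf $O(\sqrt{n})$ bound is needed, and indeed a single leaf may contain $\Theta(n)$ slabs without harming the total. (The same count shows, incidentally, that your claimed $O(n^{d/2})$ for the internal-node bookkeeping should be $O(n^{(d+1)/2})$, since list splitting is linear in each node's box count; this is still within the target bound.)

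Consequently, the machinery you introduce to circumvent the nonexistent obstacle --- padding the input with $O(n)$ empty boxes to force a tameness property, and carrying a running product of $d$ polynomials down each root-to-leaf path with incremental ``re-clipping'' --- is unnecessary, and, more importantly, it is exactly where your proof stops: you explicitly defer showing that the total re-clipping cost is $O(n^{(d+1)/2}\log n)$, which is the only quantitatively nontrivial statement in that part of the plan. As written, the proposal therefore does not establish the claimed running time. Replacing that portion with the face-counting argument above closes the gap and yields the theorem with the plain algorithm \texttt{SDC-DDistribution}, exactly as in the paper.
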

\begin{SHORT}
Due to lack of space we defer the complete proof to the extended version~\cite{2017-ARXIV-DepthDistributionInHighDimension-BabrayPerezRojas}.
\end{SHORT}
\begin{LONG}
\begin{proof}
	First, let us show that the running time $T(n)$ of \Cref{alg:ddistribution-inner} is within $\bigo{n^\frac{d+1}{2}\log n}$. 
	We can charge the number of boxes in the set to the number of ($d$-1)-faces intersecting the domain: if a box in $\mathcal{B}$ does not have a ($d$-1)-face intersecting the domain, then it covers the entire domain, and it would have been removed (simplified) from the input in the parent recursive call. 
	Note that the ($d$-1)-faces orthogonal to dimension $x_1$ cannot intersect both the sub-domains $\Gamma_L$ and $\Gamma_R$ of the recursive calls at the same time (because the algorithm uses a hyperplane orthogonal to $x_1$ to split the domain into $\Gamma_L$ and $\Gamma_R$). Hence, although at the $d$-th level of the recursion there are $2^d$ recursive calls, any ($d$-1)-face can appear in at most $2^{d-1}$ of those. In general, for any $i$, there are at most  $2^{i}$ recursive calls at the $i$-th level of recursion, but any ($d$-1)-face of the original set can intersect at most $2^{\floor{i/d}(d-1)}$ of the cells corresponding to the domain of those calls. Hence, the total number of ($d$-1)-faces which \emph{survive} until the $i$-th level of the recursion tree is within $\bigo{n2^{\floor{i/d}(d-1)}}$ (a similar argument was used by Overmars and Yap~\cite{Overmars1991} to bound the running time of the data structure they introduced). 
	
	Let $h$ be the height of the recursion tree of \Cref{alg:ddistribution-inner}. Chan~\cite{Chan2013} showed that the partition in steps \ref{step:partition_1}-\ref{step:partition_3} makes $h$ to be within a constant term of $\frac{d}{2}\log n$.  We analyze separately  the total cost $T_I(n)$ of the \emph{interior nodes} of the recursion tree (i.e. the nodes corresponding to recursive calls which fail the base case condition in step \ref{step:basecase_slabs}), from the total cost $T_L(n)$ of the \emph{leaves} of the recursion tree. 
		
	Since, the cost of each interior node is linear in the number of ($d$-1)-faces intersecting the corresponding domain, $T_I(n)$ is bounded by:
	\begin{align*}
		T_I(n) &\in  \sum_{i=1}^{h}{\bigo{n2^{\floor{i/d}(d-1)}}}  \\
					 &	\subseteq  \bigo{n\sum_{i=1}^{h}{2^{i(d-1)/d}}} \\
					  &\subseteq  \bigo{n 2^{\frac{d-1}{d} h}}  \\
					  & \subseteq \bigo{n 2^{\frac{d-1}{d} \frac{d}{2}\log n }}  \\
					  & = \bigo{n^{\frac{d+1}{2}}}
	\end{align*}
	To analyze the total cost of the leaves of the recursion tree, first note that the total number $l$ of such recursive calls is within $\bigo{n^{d/2}}$. Let $n_1, \ldots, n_l$ denote the number of ($d$-1)-faces in each of those recursive calls, respectively.
	Note that $T_L(n)$ is within $\bigo{\sum_{i=1}^{l} n_i \log n_i}$ because the result of \Cref{lem:ddist_slabs} is used in line step~\ref{step:basecase_slabs} of the algorithm.
	 Besides, since the number of ($d$-1)-faces which survive until the $h$-th level of the recursion tree is within $\bigo{n^{\frac{d-1}{2}}}$ , $\sum_{i=1}^{l} {n_i} \in \bigo{n^{\frac{d+1}{2}}}$.
	 That bound, and the fact that  $\log {n_i} \le \log n$, for all $i \in [1..l]$, yields $T_L(n) \in  \bigo{n^\frac{d+1}{2}\log n}$. 	As $T(n) = T_I(n) + T_L(n)$, the bound for the running time follows. 
	
	With respect to the space used by the algorithm, note that only one path in recursion tree is active at any moment, and that at most $\bigo{n}$ extra space is needed within each recursive call. Since the height of the recursion tree is within $\bigo{\log n}$, the total space used by the algorithm is clearly within $\bigo{n \log n}$.
	\qed
\end{proof}
\end{LONG}

Note that in \texttt{SDC-DDistribution} the \textsc{Depth Distribution} is accumulated into a parameter. This is only to simplify the description and analysis of  the algorithm, it does not impact its computational or space complexity. The initialization of the parameters of \texttt{SDC-DDistribution} should be done as below:

\begin{algorithm}
	\caption{\texttt{DDistribution($\mathcal{B}, \Gamma$)}}          
	\label{alg:ddistribution}
	\begin{algorithmic}[1]
		\Require A set $\mathcal{B}$ of $n$ boxes in $\mathbb{R}^d$, a $d$-dimensional domain box $\Gamma$
		\Ensure The \textsc{Depth Distribution} of $\mathcal{B}$ within $\Gamma$
		\State $\tuplelr{V_1, V_2, \ldots, V_n} \gets \tuplelr{0,0, \ldots, 0}$
		\State \texttt{SDC-DDistribution}($\mathcal{B}, \Gamma, 0, \tuplelr{V_1, V_2, \ldots, V_n}$)
		\State  return $\tuplelr{V_1, V_2, \ldots, V_n}$
	\end{algorithmic}
\end{algorithm}

The bound for the running time in \Cref{theo:ddist_comp_spacey} is worse than that of computing the \textsc{Klee's Measure} (and \textsc{Maximum Depth}) by a factor within $\bigo{\sqrt{n} \log n}$, which raises the question of the optimality of the bound: we consider this matter in the next section.

\subsection{Conditional Lower Bound}\label{sec:lower_bounds}
As for many problems handling high dimensional inputs, the best lower bound known for this problem is $\Omega(n \log n)$~\cite{Fredman78}, which derives from the fact that the \textsc{Depth Distribution} is a generalization of the \textsc{Klee's Measure} problem. This bound, however, is tight only when the input is a set of intervals (i.e, $d=1$).
For higher dimensions, the conjectured lower bound of $\Omega(n^{d/2})$ described by Chan~\cite{Chan08} in 2008 for the computational complexity of computing the \textsc{Klee's Measure} can be extended analogously  to the computation of the \textsc{Depth Distribution}.

One intriguing question is whether in dimension $d=2$, as for \textsc{Klee's Measure}, the \textsc{Depth Distribution} can be computed in time within $\bigo{n \log n}$.
We argue that doing so would imply breakthrough results in a long standing problem, \textsc{Matrix Multiplication}. We show that any instance of \textsc{Matrix Multiplication} can be solved using an algorithm which computes the \textsc{Depth Distribution} of a set of rectangles in the plane. For this, we make use of the following simple observation:
\paragraph{\textbf{\emph{Observation 1.}}}
	Let  \emph{A,B} be two $n \times n$ matrices of real numbers, and let $C_i$ denote the $n \times n$ matrix that results from multiplying the $n \times 1$ vector corresponding to the $i$-th column of $A$ with the $1\times n$ vector corresponding to the $i$-th row of $B$. Then, $AB = \sum_{i=1}^{n} C_i$. 

\medskip
We show in \Cref{theo:ddist_matrixmult} that multiplying two $n \times n$ matrices can be done by transforming the input into a set of $\bigo{n^2}$ axis-aligned rectangles, and computing the \textsc{Depth Distribution} of the resulting set. Moreover, this transformation can be done in linear time, thus, the theorem yields a conditional lower bound for the computation of the \textsc{Depth Distribution}.

\begin{theorem}\label{theo:ddist_matrixmult}
Let  $A,B$ be two $n \times n$ matrices of non-negative real  numbers. There is a set $\mathcal{B}$ of rectangles of size within $O(n^2)$, and a domain rectangle $\Gamma$, such that the \textsc{Depth Distribution} of $\mathcal{B}$ within $\Gamma$  can be projected to obtain the value of the product $AB$.
\end{theorem}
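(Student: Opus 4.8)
The plan is to turn the identity $(AB)_{jk}=\sum_{i=1}^{n}A_{ji}B_{ik}$ from \emph{Observation 1} into geometry. I will cut $\Gamma$ into $n^3$ unit cells $\Gamma_{i,j,k}$, one per triple $(i,j,k)$, so that inside $\Gamma_{i,j,k}$ the region covered by exactly a certain number of boxes has area $A_{ji}B_{ik}$, and so that this ``certain number'' depends only on the pair $(j,k)$. Then the $n$ cells sharing a fixed $(j,k)$ all deposit their contribution at one common depth $\ell(j,k)$, whence $V_{\ell(j,k)}=\sum_i A_{ji}B_{ik}=(AB)_{jk}$; making the depths $\ell(j,k)$ pairwise distinct, the ``projection'' is just $(AB)_{jk}=V_{\ell(j,k)}$. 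I assume WLOG that every entry of $A$ and $B$ lies in $[0,1]$ (rescaling $A,B$ only rescales $AB$).

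Concretely I would take $\Gamma=[0,n]\times[0,n^2]$, split the $y$-axis into $n$ ``$i$-bands'' $[(i-1)n,\,in]$, split each $i$-band into $n$ unit ``$j$-rows'', split the $x$-axis into $n$ unit ``$k$-columns'', and let $\Gamma_{i,j,k}$ be the $j$-row of the $i$-band crossed with the $k$-column. For content I add, for each $(j,i)$, the horizontal box $H_{ji}=[0,n]\times[\,(i-1)n+(j-1),\;(i-1)n+(j-1)+A_{ji}\,]$, and for each $(i,k)$ the vertical box $W_{ik}=[\,k-1,\;k-1+B_{ik}\,]\times[(i-1)n,\,in]$. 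Because all entries are at most $1$, a routine case check shows that the only content boxes meeting $\Gamma_{i,j,k}$ with positive area are $H_{ji}$ (covering its full width up to height $A_{ji}$) and $W_{ik}$ (covering its full height up to width $B_{ik}$); crucially there is no ``cross term'' $H_{ji}\cap W_{i'k}$ with $i'\neq i$, since $H_{ji}$ is confined to $i$-band $i$. Thus the part of $\Gamma_{i,j,k}$ covered by exactly two content boxes is the rectangle $H_{ji}\cap W_{ik}$, of area $A_{ji}B_{ik}$, while the rest of the cell is covered by at most one content box.

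To make the pair $(j,k)$ readable from the depth I superimpose ``base'' boxes that raise the coverage of all of $\Gamma_{i,j,k}$ uniformly by an amount $\beta_{jk}$ not depending on $i$. For the $j$-part I put, inside each $i$-band, the nested boxes $F_{i,j'}=[0,n]\times[\,(i-1)n+(j'-1),\;in\,]$ for $j'=2,\dots,n$, each taken $3$ times; inside $\Gamma_{i,j,k}$ exactly $F_{i,2},\dots,F_{i,j}$ apply, adding $3(j-1)$. For the $k$-part I put the nested boxes $G_{k'}=[\,k'-1,\;n\,]\times[0,n^2]$ for $k'=2,\dots,n$, each taken $3n$ times; inside $\Gamma_{i,j,k}$ exactly $G_2,\dots,G_k$ apply, adding $3n(k-1)$. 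Hence every point of $\Gamma_{i,j,k}$ has base coverage $\beta_{jk}=3(j-1)+3n(k-1)$, so the depth inside $\Gamma_{i,j,k}$ takes only the values $\beta_{jk},\beta_{jk}+1,\beta_{jk}+2$, the value $\beta_{jk}+2$ on $H_{ji}\cap W_{ik}$. Since $(j,k)\mapsto (j-1)+n(k-1)$ is a bijection onto $\{0,\dots,n^2-1\}$ and $\beta_{jk}$ is $3$ times it, the triples $\{\beta_{jk},\beta_{jk}+1,\beta_{jk}+2\}$ are pairwise disjoint; as the $\Gamma_{i,j,k}$ tile $\Gamma$, I get $V_{\beta_{jk}+2}=\sum_{i=1}^{n}A_{ji}B_{ik}=(AB)_{jk}$, i.e. $AB$ is the projection $(AB)_{jk}=V_{3((j-1)+n(k-1))+2}$. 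The total count is $2n^2$ content boxes and $6n(n-1)$ base boxes, within $O(n^2)$, and everything is built in time $O(n^2)$.

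The step I expect to be the crux is keeping the base boxes within $O(n^2)$. A direct encoding of $(j,k)$ in the depth must use coverage values of magnitude $\Theta(n^2)$, and since the $j$-indexing is ``internal'' to the $i$-bands, $j$-boxes must be replicated in each of the $n$ bands, which would cost $\Theta(n^3)$ if the $j$-range were large. The resolution is the asymmetry between the axes: the $k$-index is ``global'' along $x$, so I load the large range ($3n(k-1)$) onto the boxes $G_{k'}$ (only $O(n)$ boxes before replication) and keep the boxes $F_{i,j'}$ small-range ($3(j-1)$), so both families stay within $O(n^2)$. The remaining work is the routine case analysis showing no stray box meets a cell (using that entries are $\le1$), and checking that the cells tile $\Gamma$ so that no area outside the cells pollutes the relevant depths.
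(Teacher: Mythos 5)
Your construction is correct and follows essentially the same strategy as the paper's proof: both implement Observation~1 by turning the entries of $A$ and $B$ into axis-parallel strips whose pairwise intersections have area $A_{ji}B_{ik}$, and both use $O(n^2)$ replicated ``offset'' boxes so that the output position $(j,k)$ is encoded in the depth independently of the summation index $i$, letting the \textsc{Depth Distribution} accumulate $\sum_i A_{ji}B_{ik}$ at a single component. The only differences are cosmetic: the paper arranges variable-sized gadgets for the outer products $C_i$ side by side and separates depths with a parity (mod~2) argument using multiplicities $1$, $2$, and $2n$, whereas you normalize the entries to $[0,1]$, work on a uniform unit grid of $i$-bands, and separate depths mod~3 with multiplicities $3$ and $3n$.
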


\begin{SHORT}
Intuitively, we create a \emph{gadget} to represent each matrix $C_i$. Within the $i$-th gadget, there will be a rectangular region for each component of $C_i$ with the value of that component as volume.
We arrange the boxes so that two distinct regions have the same depth if and only if they represent the same respective coefficients of two distinct matrices $C_i$ and $C_{i'}$ (formally, they represent coefficients $(C_i)_{j,k}$ and $(C_{i'})_{j',k'}$, respectively, such that $i \neq i', j = j'$, and $k = k'$). 

Due to lack of space, we defer the complete proof to the extended version~\cite{2017-ARXIV-DepthDistributionInHighDimension-BabrayPerezRojas}.
\end{SHORT}
\begin{LONG}
\begin{proof}
	
	We create a \emph{gadget} to represent each $C_i$. Within the $i$-th gadget, there will be a rectangular region for each coefficient of $(C_i)$ with the value of that coefficient as volume (see \Cref{fig:matrixMultToDdist1} for a general outlook of the instance).  We arrange the boxes so that two of such rectangular regions have the same depth if and only if they represent the same respective coefficients of two distinct matrices $C_i$ and $C_{i'}$ (formally, they represent coefficients $(C_i)_{j,k}$ and $(C_{i'})_{j',k'}$, respectively, such that $i \neq i', j = j'$, and $k = k'$).
	\begin{figure}
		\begin{center}
			\includegraphics[page=2,width=1\linewidth]{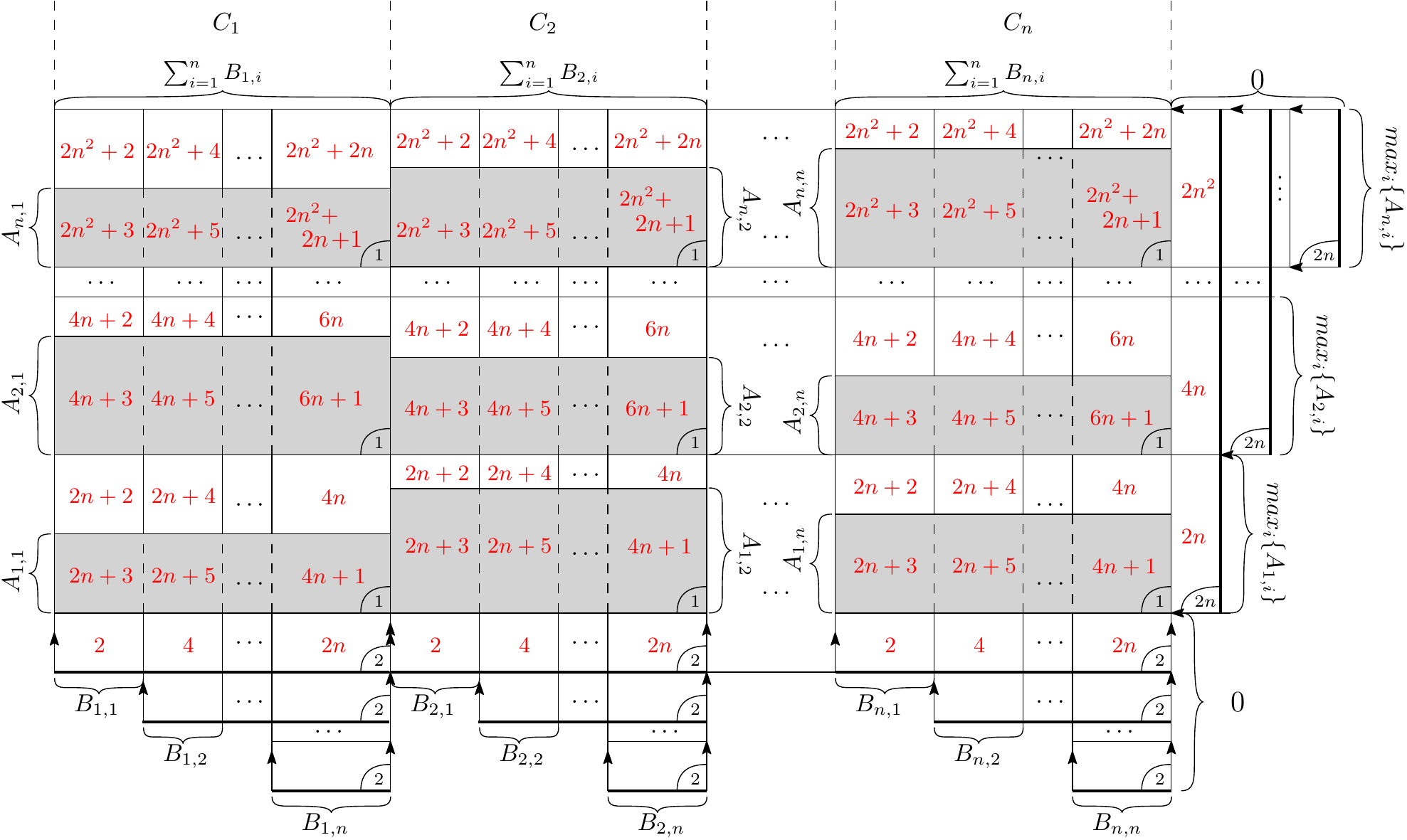}
		\end{center}   
		\caption{An outlook of the instance generated for the product $AB$: we add a gadget for each $C_1, \ldots, C_n$, within a domain $\Gamma$ as in a). In b), a representation of $C_i$ with $2n$ boxes, the volume of the rectangular regions correspond to the coefficients of $C_i$ (the regions in a gadget must have distinct depths to avoid that their volumes are added into a same component of the \textsc{Depth Distribution}).}
		\label{fig:matrixMultToDdist1}
	\end{figure}
	
	We describe a set of $5n^2$ boxes $\mathcal{B}$ (one box for each of the $n^2$ coefficients of $A$, two boxes for each of the $n^2$ coefficients  of $B$, and $2n^2$ additional boxes) such that, for each $i,j \in [1..n]$, the  $(2ni + 2j + 1)$-th component of the \textsc{Depth Distribution} of $\mathcal{B}$ is equal to the component $AB_{i,j}$ of the product $AB$. Such a set can be constructed as follows (see \Cref{fig:matrixMultToDdist2} for a graphical representation of the instance generated):
	
	\begin{itemize}
		\item{Let the domain $\Gamma = \{(x,y) \mid 0 \le x \le \sum_{i}{\sum_{j}{B_{i,j}}}, 0 \le y \le \sum_i{\max_j \{A_{i,j}\}}\}$.} 		
		\item{For all $i \in [1..n]$ we create a gadget for $C_i$ that covers the entire domain in the $y$-direction, and that spans from $C_i^{start} = \sum_{j=1}^{i-1}{\sum_{k=0}^{n}{B_{j,k}}}$} to $C_i^{end} = C_i^{start} + \sum_{k=0}^{n}{B_{i,k}}$ in the $x$-direction.
		\item {Within the gadget for $C_i$ we place one box for each $A_{j,i}$ and two boxes for each $B_{i,j}$, for $i,j \in [1.. n]$, as follows: the one corresponding to $A_{j,i}$ will span $C_i$ entirely in the $x$-direction, and is bounded by $(\sum_{k=1}^{j}{\max_{l=1}^{n} \{A_{k,l}\}}) \le y \le (A_{j,i} + \sum_{k=1}^{j}{\max_{l=1}^{n} \{A_{k,l}\}})$ in the $y$-direction.
		 For $B_{i,j}$ we place two identical boxes entirely spanning $C_i$ in the $y$-direction, and in the $x$-direction bounded by $(C_i^{start} + \sum_{k=1}^{j-1}{B_{i,k}}) \le x \le C_i^{end}$.}
		
		\item{Finally, we add $2n^2$ boxes to ensure that rectangular regions corresponding to two coefficients $C_{i,j}$ and $C_{i,k}$ in distinct rows $j,k$ of a same $C_i$ do not share the same depth, for all $i,j,k \in [1.. n]$ . For this, for all $j \in [1..n]$ we add $2n$ identical boxes entirely spanning the domain   in the $x$-direction,  and spanning from $(\sum_{k=1}^{j}{\max_{l=1}^{n} \{A_{k,l}\}})$} to $(\sum_i{\max_j \{A_{i,j}\}}))$ in the $y$-direction.
	\end{itemize}
	Note that in the instance generated, for $i,j \in [1..n]$:
	\begin{itemize}
		\item a region has odd depth if and only if its volume is equal to some coefficient of any $C_i$;
		\item the regions corresponding to coefficients of the $i$-th rows have depth between $(2in + 3)$ and $(4in+1)$;
		\item within the gadget for each $C_i$, the rectangular region with volume corresponding to the coefficient $C_{i,j}$ has depth $(2ni + 2j + 1)$, and no other rectangular region within the gadget has that depth;
		\item two distinct regions have the same depth if and only if they represent the same respective coefficients of two distinct matrices $C_i$ and $C_{i'}$.
	\end{itemize}
	 The arguments above and the fact that, by definition of the \textsc{Depth Distribution} the volumes of regions with the same depth are accumulated together, yield the result of the theorem.
	\qed
\end{proof}
\end{LONG}

The optimal time to compute the product of two $n \times n$ matrices is still open. It can naturally be computed in time within $O(n^3)$. However, Strassen showed  in 1969  that within $O(n^{2.81})$ arithmetic  operations  are enough~\cite{Strassen1969}. This gave rise to a new area of research, where the central question is to determine the value of the exponent of the computational complexity of square matrix multiplication, denoted $\omega$, and defined  as  the minimum  value  such  that  two $n \times n$ matrices  can be  multiplied  using within $O(n^{\omega + \varepsilon})$ arithmetic operations for any $\varepsilon > 0$. 

\begin{LONG}
\begin{landscape}
	\begin{figure}[h]
		\begin{center}
			\includegraphics[page=1, width=.85\linewidth]{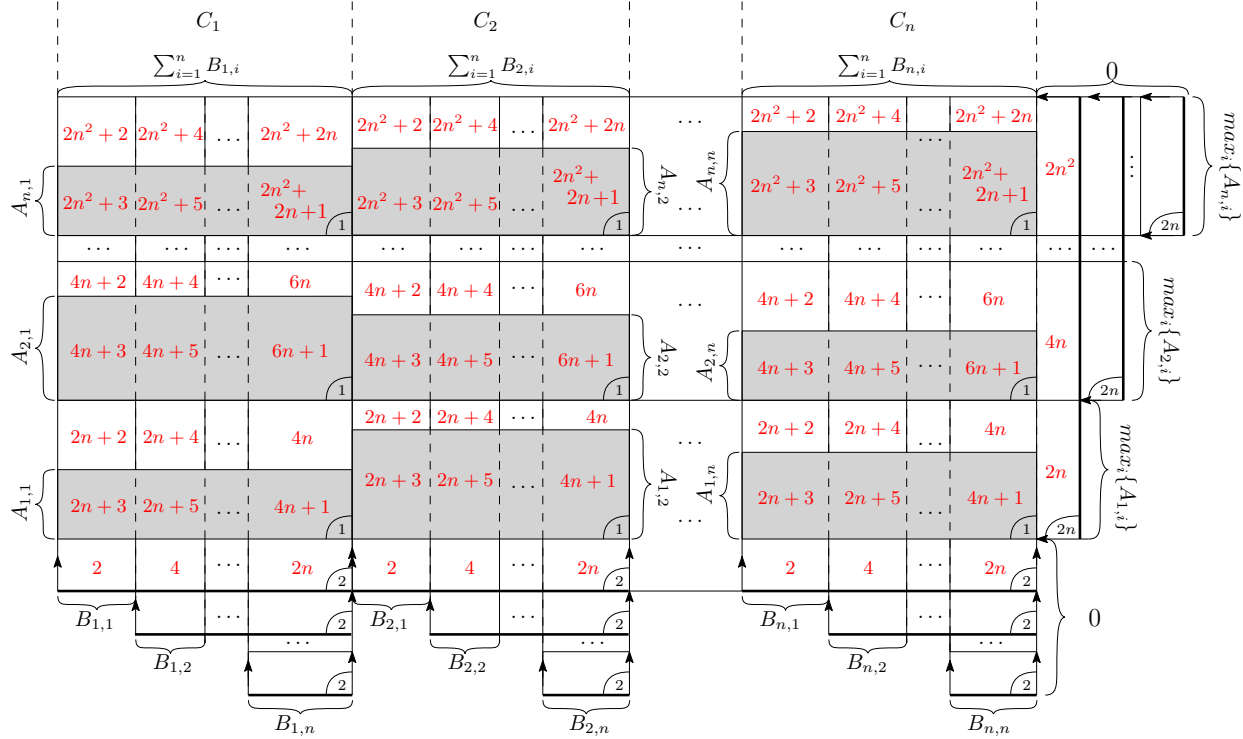}
		\end{center}   
		\caption{Illustration of an instance of \textsc{Depth Distribution} generated for the product $AB$.  The text in red is the depth of the region. The small arrows indicate that the boxes they delimit span the entire domain in the direction they point to. Small numbers in the corner of each box indicate the number of exact copies of that box added to the instance (or intuitively, the weight of the box). Finally, the numbers over curly brackets indicate the length of the region delimited by the brackets.
		}
		\label{fig:matrixMultToDdist2}
	\end{figure}
\end{landscape}
\end{LONG}

The result of \Cref{theo:ddist_matrixmult} directly yields a conditional lower bound on the complexity of \textsc{Depth Distribution}: in particular, \textsc{Depth Distribution} in dimension as low as two, can be solved in time within $\bigo{n \log n}$, then \textsc{Matrix Multiplication} can be computed in time within $\bigo{n^2}$, i.e. $\omega = 2$. However, this would be a great breakthrough in the area, the best known upper bound to date is approximately $\omega \le 2.37$, when improvements in the last 30 years~\cite{CoppersmithW87,Gall14} have been in the range $[2.3728, 2.3754]$.

\begin{corollary} [Conditional lower bound]\label{col:ddist2d_lowerbound}
	Computing the \textsc{Depth Distribution} of a set $B$ of $n$ $d$-dimensional boxes requires  time within $\Omega(n^{1+c})$, for some constant $c > 0$, unless  two $n \times n$ matrices can be multiplied in time $\bigo{n^{2+\varepsilon}}$, for any constant $\varepsilon > 0$.
\end{corollary}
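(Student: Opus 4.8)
The plan is to argue by contraposition, converting a hypothetical near-linear algorithm for \textsc{Depth Distribution} into a near-quadratic algorithm for \textsc{Matrix Multiplication} by plugging it into the reduction of \Cref{theo:ddist_matrixmult}. Concretely, suppose the conclusion fails: computing the \textsc{Depth Distribution} does \emph{not} require time within $\Omega(n^{1+c})$ for any constant $c>0$. Under the usual reading of such statements this means that for every constant $c>0$ there is an algorithm $\mathcal{A}_c$ computing the \textsc{Depth Distribution} of an arbitrary set of $N$ rectangles \emph{in the plane} in time within $\bigo{N^{1+c}}$. I will show that this forces $\omega = 2$, i.e. that two $n\times n$ matrices can be multiplied in time within $\bigo{n^{2+\varepsilon}}$ for every constant $\varepsilon>0$ --- which is exactly the escape clause of the statement --- so that, unless $\omega=2$, the claimed lower bound must hold.

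Fix an arbitrary constant $\varepsilon>0$ and set $c=\varepsilon/2$. Given two $n\times n$ matrices $A,B$ of non-negative real numbers, \Cref{theo:ddist_matrixmult} produces, in time within $\bigo{n^2}$ (i.e. linear in the size of the input), a domain rectangle $\Gamma$ and a set $\mathcal{B}$ of $N\le 5n^2$ axis-aligned rectangles from whose \textsc{Depth Distribution} within $\Gamma$ the product $AB$ can be recovered: each entry of $AB$ is a specific component of the distribution vector, so the whole matrix is read off by $\bigo{n^2}$ array look-ups. Running $\mathcal{A}_c$ on this instance costs time within $\bigo{N^{1+c}}=\bigo{(5n^2)^{1+c}}=\bigo{n^{2(1+c)}}=\bigo{n^{2+\varepsilon}}$. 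Hence the composite procedure --- build the instance, call $\mathcal{A}_c$, project back --- multiplies $A$ and $B$ in time within $\bigo{n^2}+\bigo{n^{2+\varepsilon}}+\bigo{n^2}=\bigo{n^{2+\varepsilon}}$. Since $\varepsilon>0$ was arbitrary, $\omega=2$, establishing the contrapositive and hence the corollary. The non-negativity hypothesis of \Cref{theo:ddist_matrixmult} is harmless: splitting each of $A,B$ into its non-negative and non-positive parts reduces an arbitrary real product to a constant number of non-negative ones, and in any event the value of $\omega$ is already determined by instances with $0/1$ entries.

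The only thing that needs checking is that the reduction genuinely lands inside the regime the hypothetical algorithm covers, and there is no real obstacle here: the instance built by \Cref{theo:ddist_matrixmult} is \emph{planar} (so the constant-dimension hypothesis applies), its size $N$ is only a constant factor above the number of matrix entries (so the exponent blow-up is precisely the factor $2$ absorbed by the choice $c=\varepsilon/2$, with no logarithmic slack to worry about), and both the construction and the final projection run in time linear in the input/output size, hence are dominated by the call to $\mathcal{A}_c$. Apart from this routine bookkeeping the argument is an overhead-free application of \Cref{theo:ddist_matrixmult}, so the corollary is essentially a restatement of that theorem in the vocabulary of conditional lower bounds.
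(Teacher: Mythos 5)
Your proposal is correct and follows essentially the same route as the paper, which presents the corollary as a direct consequence of \Cref{theo:ddist_matrixmult}: the reduction produces $N\in\bigo{n^2}$ rectangles in linear time, the product $AB$ is read off from the resulting \textsc{Depth Distribution}, and the exponent bookkeeping $c=\varepsilon/2$ absorbs the quadratic blow-up in instance size. Your added remarks on handling signed entries and on the planarity of the constructed instance are accurate but not needed beyond what the paper already assumes.
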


The running time of the algorithm that we described in \Cref{theo:ddist_comp_spacey} can be improved for large classes of instances (i.e. asymptotically infinite) by considering measures of the difficulty of the input other than its size. We describe two of these improved solutions in the next section.

\section{Multivariate analysis}\label{sec:finer_analysis}
Even though the asymptotic complexity of $\bigo{n^{\frac{d+1}{2}} \log n}$  is the best we know so far for computing the \textsc{Depth Distribution} of a set of $n$ $d$-dimensional boxes, there are many cases which can be solved faster.
Some of those ``easy'' instances can be mere particular cases, but others can be hints of some hidden measures of difficulty of the \textsc{Depth Distribution} problem.
We show that, indeed, there are at least two such difficulty measures, gradually separating instances of the same size $n$ into various classes of difficulty. Informally, the first one (the \emph{profile} of the input set, Section~\ref{sec:profile}) measures how separable the boxes are by axis-aligned hyperplanes, whereas the second one (the \emph{degeneracy} of the intersection graph, Section~\ref{sec:inters-graph-degen}) measures how ``complex" the interactions of the boxes are in the set between them. Those measures inspire similar results for the computation of the \textsc{Klee's Measure} and of the \textsc{Maximum Depth}.

\subsection{Profile} \label{sec:profile}
The \emph{$i$-th profile} $p_i$ of a set of boxes $\mathcal{B}$ is the maximum number of boxes intersected by any hyperplane orthogonal to the $i$-th dimension; and the \emph{profile} $p$ of $\mathcal{B}$ is the minimum $p=\min_{i\in[1..d]}\{p_i\}$ of those over all dimensions.
D'Amore~\cite{dAmoreNRW95} showed how to compute it in linear time (after sorting the coordinate of the boxes in each dimension).
The following lemma shows that the \textsc{Depth Distribution} can be computed in time sensitive to the profile of the input set.
%

\begin{lemma}\label{lem:profile_ddist}
Let $\mathcal{B}$ be a set of boxes with profile $p$, and $\Gamma$ be a $d$-dimensional axis-aligned domain box. The \textsc{Depth Distribution} of $\mathcal{B}$ within $\Gamma$ can be computed in time within $\bigo{n \log n + np^{\frac{d-1}{2}}\log p} \subseteq \bigo{n^\frac{d+1}{2}\log n}$.
\end{lemma}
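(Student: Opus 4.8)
The plan is to combine the profile-based partition of the domain with the divide-and-conquer machinery of \Cref{theo:ddist_comp_spacey}. First I would recall the key structural fact about the profile: if the profile $p$ is witnessed in, say, the $i$-th dimension, then sweeping a hyperplane orthogonal to $x_i$ through the $2n$ endpoints of the boxes, and cutting the domain $\Gamma$ at each endpoint, decomposes $\Gamma$ into $\bigo{n}$ slabs $\Gamma_1,\ldots,\Gamma_{\bigo{n}}$ such that each $\Gamma_t$ meets at most $p$ boxes of $\mathcal{B}$. This is exactly D'Amore's observation and it can be computed in time $\bigo{n\log n}$ after the sorting step. Since the \textsc{Depth Distribution} is additive over a partition of the domain into interior-disjoint pieces, it suffices to compute, for each slab $\Gamma_t$, the \textsc{Depth Distribution} of the boxes meeting $\Gamma_t$ within $\Gamma_t$, and then add the resulting vectors componentwise (being careful to shift by the number of boxes that entirely contain $\Gamma_t$, exactly as the parameter $c$ does in \texttt{SDC-DDistribution}).

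The second step is to bound the cost of solving one slab $\Gamma_t$. Let $n_t$ be the number of boxes of $\mathcal{B}$ that intersect $\Gamma_t$; by the choice of the cutting dimension, $n_t \le p$. Running \Cref{alg:ddistribution-inner} on this subinstance costs, by the analysis in the proof of \Cref{theo:ddist_comp_spacey}, time within $\bigo{n_t^{\frac{d+1}{2}}\log n_t} \subseteq \bigo{p^{\frac{d+1}{2}}\log p}$. Summing over all $\bigo{n}$ slabs would give $\bigo{n p^{\frac{d+1}{2}}\log p}$, which is a factor $p$ worse than the claimed bound. To shave this factor I would exploit that, among the boxes meeting a given slab $\Gamma_t$, every box that is \emph{orthogonal to the cutting dimension $x_i$} is already a slab when restricted to $\Gamma_t$ (its two $x_i$-faces lie outside $\Gamma_t$ by construction of the cut), so only the boxes having a $x_i$-face strictly inside $\Gamma_t$ contribute to the $(d{-}1)$-faces that drive the recursion in that dimension — and each of the $2n$ such faces lies in exactly one slab. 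More precisely, I would re-run the accounting of the proof of \Cref{theo:ddist_comp_spacey} but charge each interior node to $(d{-}1)$-faces \emph{other than those orthogonal to $x_i$}; within a single slab there are at most $p$ boxes and hence at most $\bigo{p}$ faces per dimension, but the recursion only needs to branch on the remaining $d-1$ dimensions down to depth $\approx\frac{d-1}{2}\log p$, giving a per-slab interior cost within $\bigo{p^{\frac{d-1}{2}+ 1/(d-1)}\cdots}$ — and, crucially, the total over all slabs of the number of surviving faces telescopes, because each of the $\bigo{n}$ global faces is counted in only one slab. Carrying this through yields a total of $\bigo{n\log n}$ for the sorting and slab construction, plus $\bigo{n\,p^{\frac{d-1}{2}}\log p}$ for all the recursive solves, which is the claimed bound.

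The last step is the elementary inequality $\bigo{n\log n + np^{\frac{d-1}{2}}\log p}\subseteq\bigo{n^{\frac{d+1}{2}}\log n}$: this holds because $p\le n$, so $np^{\frac{d-1}{2}}\log p \le n\cdot n^{\frac{d-1}{2}}\log n = n^{\frac{d+1}{2}}\log n$, and $n\log n$ is dominated by the same quantity for $d\ge 1$.

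The main obstacle I expect is the refined charging argument that removes the spurious factor of $p$. The naive "solve each slab independently and sum" bound is off by exactly $p$, and recovering the right exponent requires arguing that the work done inside the slabs, when aggregated, behaves like a single instance of size $n$ with recursion depth governed by $p$ rather than $n$ — i.e. that the $(d{-}1)$-faces are shared across slabs only through the cutting dimension, where they contribute no branching. One must also be careful with two bookkeeping points: boxes that fully contain a slab must be removed and tracked via the offset $c$ (so their depth contribution is added back as a shift of the index), and boxes that merely clip a slab boundary in the cutting dimension must be correctly classified as slabs there. Neither is conceptually hard, but both must be stated precisely for the face-counting to be valid.
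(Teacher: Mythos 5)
There is a genuine gap in the aggregation step, and it comes from cutting at every endpoint. With $\bigo{n}$ slabs each meeting at most $p$ boxes, the quantity that controls the total work is $\sum_t n_t$, where $n_t$ is the number of boxes meeting slab $\Gamma_t$; since a box whose $x_i$-extent spans many endpoints is replicated into every slab it spans, this sum can be as large as $\Theta(np)$, not $\Theta(n)$. Your telescoping claim --- that each of the $\bigo{n}$ global $(d{-}1)$-faces is counted in only one slab --- is true only for the faces orthogonal to the cutting dimension $x_i$, and those are precisely the faces that never drive the recursion inside a slab (every box surviving in $\Gamma_t$ spans it fully in $x_i$). The faces that do drive the recursion, namely those orthogonal to the other $d-1$ dimensions, are duplicated across every slab their box spans, so the per-slab costs do not telescope. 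Even granting your (correct) observation that each slab is effectively a $(d{-}1)$-dimensional instance on $n_t \le p$ boxes, solvable in time within $\bigo{n_t^{d/2}\log n_t}$, the total is $\sum_t n_t^{d/2}\log n_t \le p^{\frac{d-2}{2}}\log p \cdot \sum_t n_t$, which lies only within $\bigo{np^{d/2}\log p}$ --- a factor $\sqrt{p}$ above the claimed bound. The sketched per-slab cost $\bigo{p^{\frac{d-1}{2}+1/(d-1)}\cdots}$ is not justified and does not close this gap.

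The paper's proof avoids the issue entirely with a coarser cut: sweep the dimension achieving the profile and start a new slab only after every $2p$ endpoints. This yields $\bigo{n/p}$ slabs, each still meeting only $\bigo{p}$ boxes (at most $2p$ with an endpoint inside, plus at most $p$ crossing a bounding hyperplane, by the definition of the profile). Each slab is then solved as an ordinary $d$-dimensional instance via \Cref{theo:ddist_comp_spacey} in time within $\bigo{p^{\frac{d+1}{2}}\log p}$, and the bound $\bigo{(n/p)\cdot p^{\frac{d+1}{2}}\log p} = \bigo{np^{\frac{d-1}{2}}\log p}$ falls out with no refined charging. Your bookkeeping remarks (additivity of the \textsc{Depth Distribution} over interior-disjoint slabs, the offset $c$ for boxes containing a slab) and the final containment in $\bigo{n^{\frac{d+1}{2}}\log n}$ are fine and carry over unchanged to the coarser partition.
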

\begin{SHORT}
Due to lack of space we defer the complete proof to the extended version~\cite{2017-ARXIV-DepthDistributionInHighDimension-BabrayPerezRojas}.
\end{SHORT}
\begin{LONG}
\begin{proof}
We describe an algorithm which partitions the domain into independent slabs, computes the \textsc{Depth Distribution} within each slab, and combines the results into the final answer.
For this, it sweeps a plane by the dimension with smallest profile, and after every $2p$ endpoints, it creates a new slab cutting the space with a hyperplane orthogonal to this dimension.
This partitions the space into $\bigo{n/p}$ slabs, each intersecting at most $\bigo{p}$ boxes.
Finally, the algorithm computes the \textsc{Depth Distribution} of $\mathcal{B}$ within each slab in time within $\bigo{p^\frac{{d+1}}{2} \log p}$, and obtains the \textsc{Depth Distribution} of $\mathcal{B}$ within $\Gamma$ by summing the respective components of the \textsc{Depth Distribution} within each slab.
The total running time of this is within $\bigo{n \log n + np^{\frac{d-1}{2}}\log p}$.
\qed
\end{proof}
\end{LONG}

The lemma above automatically yields refined results for the computation of the \textsc{Klee's Measure} and the \textsc{Maximum Depth} of a set of boxes $\mathcal{B}$. However, applying the technique in an \emph{ad-hoc} way to these problems yields better bounds:

\begin{corollary}\label{col:profile_km}
	Let $\mathcal{B}$ be a set of boxes with profile $p$, and $\Gamma$ be a domain box. The \textsc{Klee's Measure} and \textsc{Maximum Depth} of $\mathcal{B}$ within $\Gamma$ can be computed in time within $\bigo{n \log n + np^{\frac{d-2}{2}}} \subseteq \bigo{n^{d/2}}$.
\end{corollary}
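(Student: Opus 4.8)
The plan is to mimic the algorithm of \Cref{lem:profile_ddist} but replace the per-slab subroutine with the known optimal algorithms for \textsc{Klee's Measure} and \textsc{Maximum Depth}, and to exploit the fact that those problems do not carry the extra $\bigo{\sqrt{n}\log n}$ polynomial-multiplication overhead. First I would sweep a hyperplane orthogonal to the dimension $i$ achieving the minimum profile $p=p_i$, and cut the domain with a new orthogonal hyperplane after every $2p$ box-endpoints encountered in that dimension. Since each hyperplane orthogonal to dimension $i$ meets at most $p$ boxes, each of the resulting $\bigo{n/p}$ slabs intersects $\bigo{p}$ boxes (a box that straddles a slab boundary is clipped to the slab, contributing its endpoints there); the total number of (box, slab) incidences is $\bigo{n}$, so a global pre-sort in all $d$ dimensions in $\bigo{n\log n}$ time lets us feed each sub-instance its coordinates already sorted.

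Next, within each slab, which is itself a $d$-dimensional box containing $\bigo{p}$ boxes, I would run Chan's algorithm~\cite{Chan2013} to compute the \textsc{Klee's Measure} (resp. \textsc{Maximum Depth}) in time within $\bigo{p^{d/2}}$. Summing over the $\bigo{n/p}$ slabs gives $\bigo{(n/p)\cdot p^{d/2}} = \bigo{n p^{(d-2)/2}}$, and adding the $\bigo{n\log n}$ pre-sorting and sweeping cost yields the claimed bound $\bigo{n\log n + np^{(d-2)/2}}$. For \textsc{Klee's Measure}, the global value is the sum of the per-slab \textsc{Klee's Measures}, since the slabs partition the domain and the measure is additive over a partition; this recombination is $\bigo{n/p}$ additions, which is dominated. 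For \textsc{Maximum Depth}, the global maximum is simply the maximum over the per-slab maxima, again a dominated $\bigo{n/p}$-time step. Finally, the containment $\bigo{n\log n + np^{(d-2)/2}}\subseteq\bigo{n^{d/2}}$ follows because $p\le n$, so $np^{(d-2)/2}\le n^{d/2}$, and $n\log n \subseteq \bigo{n^{d/2}}$ for $d\ge 2$ (the case $d=1$ being the trivial sorted-scanline bound $\bigo{n\log n}$).

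The main obstacle, and the reason this beats the generic corollary of \Cref{lem:profile_ddist} by a $\bigo{\sqrt{p}\log p}$ factor, is justifying that the \emph{ad hoc} per-slab subroutine really costs only $\bigo{p^{d/2}}$ rather than $\bigo{p^{(d+1)/2}\log p}$: this is exactly where the problems diverge from \textsc{Depth Distribution}, because Chan's SDC recursion for \textsc{Klee's Measure} (and \textsc{Maximum Depth}) terminates at slab base cases that are solved in \emph{linear} time by the Overmars--Yap slab routine~\cite{Overmars1991}, with no FFT-based polynomial multiplication needed — the accounting of $\bigo{p^{d/2}}$ cells each handled in linear time is what Chan~\cite{Chan2013} established. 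A minor point to check carefully is that clipping boxes to slab boundaries does not disturb the \textsc{Klee's Measure} or \textsc{Maximum Depth} within a slab (it does not, since inside a slab the clipped box and the original box occupy the same point set), and that the endpoint-counting argument genuinely caps each slab at $\bigo{p}$ distinct boxes even after accounting for boxes wholly spanning a slab in dimension $i$ — such a box contributes no endpoint inside the slab in dimension $i$, but it still intersects the bounding hyperplane, so it is counted among the $\le p$ boxes met by that hyperplane.
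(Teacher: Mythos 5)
Your proposal is correct and follows exactly the route the paper intends: the paper gives no explicit proof of this corollary, but describes it as the ``ad-hoc'' application of the profile-based slab decomposition from \Cref{lem:profile_ddist}, replacing the per-slab \textsc{Depth Distribution} subroutine with Chan's $\bigo{p^{d/2}}$ algorithm for \textsc{Klee's Measure} and \textsc{Maximum Depth}, then recombining by summation (resp.\ maximum). Your accounting $\bigo{(n/p)\cdot p^{d/2}} = \bigo{np^{(d-2)/2}}$ and the justification of why the $\sqrt{p}\log p$ overhead disappears match what the authors rely on.
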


The algorithms from \Cref{lem:profile_ddist} and \Cref{col:profile_km} asymptotically outperform previous ones  in the sense that their running time is never worse than previous algorithms by more than a constant factor, but can perform faster by more than a constant factor on specific families of instances.

An orthogonal approach is to consider how complex the interactions between the boxes are in the input set $\mathcal{B}$, analyzing, for instance, the  intersection graph of $\mathcal{B}$. We study such a technique in the next section.

\subsection{Intersections Graph Degeneracy}\label{sec:inters-graph-degen}
A \emph{$k$-degenerate} graph is an undirected graph in which every subgraph has a vertex of degree at most $k$~\cite{Lick1970}.
Every $k$-degenerate graph accepts an ordering of the vertices in which every vertex is connected with at most $k$ of the vertices that precede it (we refer below to such an ordering as a \emph{degenerate ordering}).

In the following lemma we show that this ordering can be used to compute the \textsc{Depth Distribution} of a set $\mathcal{B}$ of $n$ boxes in running time sensitive to the degeneracy of the intersection graph of $\mathcal{B}$.

\begin{lemma}
	Let $\mathcal{B}$ be a set of boxes and $\Gamma$ be a domain box, and let $k$ be the degeneracy of the intersection graph $G$ of the boxes in $\mathcal{B}$. The \textsc{Depth Distribution} of $\mathcal{B}$ within $\Gamma$ can be computed in time within  $\bigo{n \log^d n + e+ nk^{\frac{d+1}{2}}}$, where $e \in \bigo{n^2}$ is the number of edges of $G$.
\end{lemma}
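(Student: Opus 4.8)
\medskip\noindent\emph{Proof plan.} The plan is to use a degenerate ordering so that each box only has to ``look at'' its $\le k$ earlier neighbours, together with a charging argument that reconstructs the \textsc{Depth Distribution} from many tiny sub-instances.

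First I would normalise the input by replacing each $b\in\mathcal B$ with $b\cap\Gamma$ and discarding empty boxes; this changes the \textsc{Depth Distribution} within $\Gamma$ not at all and cannot increase the degeneracy, and afterwards every box lies inside $\Gamma$. Then I would build the intersection graph $G$ explicitly — reporting all intersecting pairs of axis-parallel boxes costs time within $\bigo{n\log^{d}n+e}$ — and compute a \emph{degenerate ordering} $b_1,\dots,b_n$ of its vertices, i.e. an ordering in which each $b_j$ has at most $k$ neighbours among $b_1,\dots,b_{j-1}$; call that set of earlier neighbours $N_j$, so $|N_j|\le k$. Such an ordering, together with all the lists $N_j$, is obtained in $\bigo{n+e}$ time by repeatedly removing a minimum-degree vertex.

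The crux is a charging identity. For a finite box set $X$ and a measurable region $R$, let $f_X(R)_i$ be the volume of the points of $R$ lying in exactly $i$ boxes of $X$; this is additive over disjoint regions. I claim that, putting $U_i:=\sum_{j=1}^{n} f_{\{b_j\}\cup N_j}(b_j)_i$, one has $U_i=\sum_{m\ge i}V_m$, the ``depth at least $i$'' volume, and hence $V_i=U_i-U_{i+1}$. To see this, fix a point $p$ of depth $m$ and list the boxes containing $p$ as $b_{j_1},\dots,b_{j_m}$ with $j_1<\dots<j_m$. For each $\ell$, the boxes containing $p$ with index below $j_\ell$ are adjacent to $b_{j_\ell}$ in $G$ (they share $p$) and precede it, hence all lie in $N_{j_\ell}$, and they are exactly $b_{j_1},\dots,b_{j_{\ell-1}}$; so $p$ has depth exactly $\ell$ inside $\{b_{j_\ell}\}\cup N_{j_\ell}$. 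Therefore $p$ is counted in the $i$-th component of $\sum_j f_{\{b_j\}\cup N_j}(b_j)$ exactly once (through $j=j_i$) when $i\le m$ and never otherwise, which is the claim. Taking $\ell=m$ also gives $m\le|N_{j_m}|+1\le k+1$, so every vector that appears has only its first $k+1$ components nonzero.

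Given the identity, the algorithm is: for each $j$, compute the \textsc{Depth Distribution} of the $\bigo{k}$ boxes $\{b_j\}\cup N_j$ inside the box $b_j$ by invoking \Cref{theo:ddist_comp_spacey} (cost $\bigo{k^{(d+1)/2}\log k}$ per box), add these vectors componentwise into $U$, and output $V_i=U_i-U_{i+1}$. Summing, the per-box work is within $\bigo{n\,k^{(d+1)/2}\log k}$, which together with the preprocessing yields the claimed $\bigo{n\log^{d}n+e+nk^{(d+1)/2}}$ once the logarithmic factor in the per-box cost is absorbed against the other two terms. I expect the main obstacle to be exactly the charging identity: the naive idea of ``compute the \textsc{Depth Distribution} locally inside each $b_j$'' is wrong, because the true local depth of a point of $b_j$ depends on \emph{all} boxes meeting $b_j$, of which there may be $\Theta(n)$; the point is that summing the local distributions measured against only the few \emph{earlier} neighbours telescopes into the cumulative distribution, so the many later neighbours never need to be examined. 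The remaining details — clipping to $\Gamma$, coincident boxes, and the bookkeeping of the $\log$ factor — are routine.
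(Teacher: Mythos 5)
Your proof is correct and takes essentially the same approach as the paper's: build the intersection graph, take a degenerate ordering, compute the local \textsc{Depth Distribution} of each box against its at most $k$ earlier neighbours using the domain restricted to that box, and recombine. Your global identity $V_i=U_i-U_{i+1}$ is just the telescoped form of the paper's incremental component-shifting invariant, and your honest remark about the unabsorbed $\log k$ factor applies equally to the paper's own accounting.
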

\begin{proof}
	We describe an algorithm that runs in time within the bound in the lemma.
	The algorithm first computes the intersection graph $G$ of $\mathcal{B}$ in time within $\bigo{n \log^d n + e}$~\cite{Edelsbrunner1983}, as well as the $k$-degeneracy of this graph and a degenerate ordering $O$ of the vertices in time within $\bigo{n+e}$~\cite{Matula1983}.
	The algorithm then iterates over $O$ maintaining the invariant that, after the $i$-th step, the \textsc{Depth Distribution} of the boxes corresponding to the vertices $v_1,v_2,\ldots, v_i$ of the ordering has been correctly computed.
	
	For any subset $V$ of vertices of $G$, let $DD_\mathcal{B}^\Gamma(V)$ denote the \textsc{Depth Distribution} within $\Gamma$ of the boxes in $\mathcal{B}$ corresponding to the vertices in $V$. Also, for $i \in [1..n]$ let $O[1..i]$ denote the first $i$ vertices of $O$, and $O[i]$ denote the $i$-th vertex of $O$. From $DD_\mathcal{B}^\Gamma(O[1..i\text{-1}])$ (which the algorithm ``knows" after the ($i$-1)-th iteration), $DD_\mathcal{B}^\Gamma(O[1..i])$ can be obtained as follows:
	($i.$) let $P$ be the subset of $O[1..i\text{-1}]$ connected with $O[i]$; ($ii.$) compute $DD_\mathcal{B}^{O[i]}(P \cup \{O[i]\})$ in time within $\bigo{k^\frac{d+1}{2}\log k}$ using \texttt{SDC-DDistribution} (note that the domain this time is $O[i]$ itself, instead of $\Gamma$);
	($iii.$) add to $(DD_\mathcal{B}^\Gamma(O[1..i]))_1$ the value of $(DD_\mathcal{B}^{O[i]}(P \cup O[i]))_1$;
	and ($iv.$) for all $j=[2..k\text{+1}]$, substract from  $(DD_\mathcal{B}^\Gamma(O[1..i]))_{j-1}$ the value of $(DD_\mathcal{B}^{O[i]}(P \cup O[i]))_j$ and add it to $(DD_\mathcal{B}^\Gamma(O[1..i\text{-1}]))_j$.
	
	Since the updates to the \textsc{Depth Distribution} in each step take time within $\bigo{k^\frac{d+1}{2}\log k}$, and there are $n$ such steps, the result of the lemma follows.
	\qed
\end{proof}

Unlike the algorithm sensitive to the profile, this one can run in time within $\bigo{n^{1+\frac{d+1}{2}}}$ (e.g. when $k=n$), which is only better than the $\bigo{n^\frac{d+1}{2}}$ complexity of \texttt{SDC-DDistribution} for values of the degeneracy $k$ within $\bigo{n^{1-{\frac{2}{d}}}}$.

Applying the same technique to the computation of \textsc{Klee's Measure} and \textsc{Maximum Depth} yields improved solutions as well:

\begin{corollary}
	Let $\mathcal{B}$ be a set of boxes and $\Gamma$ be a domain box, and let $k$ be the degeneracy of the intersection graph $G$ of the boxes in $\mathcal{B}$. The \textsc{Klee's Measure} and \textsc{Maximum Depth} of $\mathcal{B}$ within $\Gamma$ can be computed in time within  $\bigo{n \log^d n + e+ nk^{\frac{d}{2}}}$, where $e \in \bigo{n^2}$ is the number of edges of $G$.
\end{corollary}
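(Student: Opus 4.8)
The plan is to adapt, essentially verbatim, the algorithm of the preceding lemma, replacing the calls to \texttt{SDC-DDistribution} on the small subproblems by calls to Chan's algorithm for the \textsc{Klee's Measure} (resp. the \textsc{Maximum Depth})~\cite{Chan2013}, which runs in time within $\bigo{m^{d/2}}$ on $m$ boxes and carries no logarithmic overhead; this is precisely what lowers the exponent from $\frac{d+1}{2}$ to $\frac d2$ and removes the $\log k$ factor. As before, the algorithm first computes the intersection graph $G$ of $\mathcal{B}$ in time within $\bigo{n\log^d n + e}$ using the algorithm of Edelsbrunner~\cite{Edelsbrunner1983}, then a degenerate ordering $O = (v_1,\ldots,v_n)$ of its vertices in time within $\bigo{n+e}$ using the algorithm of Matula and Beck~\cite{Matula1983}, so that each $v_i$ is adjacent in $G$ to at most $k$ of the vertices $v_1,\ldots,v_{i-1}$. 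We may assume the boxes have been clipped to $\Gamma$.

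The algorithm then scans $O$ from left to right, maintaining after step $i$ the value of the measure (\textsc{Klee's Measure} or \textsc{Maximum Depth}) of the sub-collection $\{v_1,\ldots,v_i\}$ within $\Gamma$. The key fact is local: since $v_i$ intersects, among $v_1,\ldots,v_{i-1}$, only the boxes of the set $P$ of its predecessors adjacent to it in $G$, and $|P|\le k$, every point of $\Gamma\setminus v_i$ keeps the same depth in $\{v_1,\ldots,v_i\}$ as in $\{v_1,\ldots,v_{i-1}\}$, while every point of $v_i$ has depth in $\{v_1,\ldots,v_i\}$ equal to its depth in $P\cup\{v_i\}$. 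Consequently, for the \textsc{Maximum Depth} the running maximum is updated by taking its maximum with $\mathrm{MaxDepth}_{v_i}(P\cup\{v_i\})$, and for the \textsc{Klee's Measure} the running volume is increased by $\mathrm{vol}(v_i) - \mathrm{KM}_{v_i}(P)$, i.e. the volume of the part of $v_i$ not yet covered. In either case the auxiliary quantity is the corresponding measure of at most $k+1$ boxes within the domain box $v_i$, which Chan's algorithm computes in time within $\bigo{k^{d/2}}$.

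Since there are $n$ such steps, the total running time is within $\bigo{n\log^d n + e + nk^{d/2}}$, and only $\bigo{n}$ boxes and the $\bigo{n}$-size graph are kept at any moment, giving the stated bounds. The only point that requires care is the correctness of the incremental update, and specifically the claim that the $n-1-|P|$ earlier boxes not adjacent to $v_i$ may be ignored when updating the running value; this is exactly the defining property of a degenerate ordering, and once it is invoked the verification of the two update formulas above reduces to a routine case analysis on whether a point lies inside $v_i$. I expect this bookkeeping, rather than any algorithmic ingredient, to be the main obstacle.
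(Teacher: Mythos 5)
Your proposal is correct and follows exactly the route the paper intends: the corollary is stated as a direct application of the technique from the preceding degeneracy lemma, and you instantiate that incremental scan over a degenerate ordering with Chan's $\bigo{m^{d/2}}$ subroutine on the at most $k+1$ boxes meeting each new box, which is where the exponent drops to $\frac{d}{2}$. Your two update rules (adding $\mathrm{vol}(v_i)-\mathrm{KM}_{v_i}(P)$ for the volume, taking the maximum with $\mathrm{MaxDepth}_{v_i}(P\cup\{v_i\})$ for the depth) are the correct scalar simplifications of the vector update used in the lemma, and your justification via the defining property of the degenerate ordering is sound.
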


Such refinements of the worst-case complexity analysis are only examples and can be applied to many other problems handling high dimensional data inputs. We discuss a selection in the next section.

\begin{INUTILE}
\subsection{Computing only selected elements of the Depth Distribution}\label{sec:comp-only-select}
An intriguing question is whether the computation of a particular $V_i$ can be performed in time depending on $i$.
For instance, $V_n$ can be computed in linear time since it is the volume of the intersection of all the boxes. Besides, for any constant $c$, $V_{n-c}$ can be computed in time within $O(n^c)$. However, we show below that in many cases computing $V_i$ requires as many time as solving the \textsc{Coverage} problem, which detects whether a set of boxes $\mathcal{B}$ covers a given domain $\Gamma$. 

\begin{lemma}
Let $\mathcal{B}$ be a set of boxes and $\Gamma$ be a domain box. Computing the volume of the region within $\Gamma$ covered by exactly $n/f(n)$ boxes from $\mathcal{B}$, for any function $f$ such that $1 < f(n) \le n$, is as hard as computing whether $\mathcal{B}$ \textsc{Coverage} can be solved in that time.	
\end{lemma}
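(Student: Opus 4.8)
The plan is to prove the claim by a linear-time reduction from the \textsc{Coverage} problem to the computation of the single value $V_{n/f(n)}$ of the \textsc{Depth Distribution}. Let $(\mathcal{B}_0,\Gamma)$ be an instance of \textsc{Coverage} with $|\mathcal{B}_0|=m$, i.e., we must decide whether $\Gamma\subseteq\bigcup_{b\in\mathcal{B}_0}b$. I will build a set $\mathcal{B}$ of $n$ boxes, with $m\le n=\bigo{m}$, and a target index $t$ equal to $n/f(n)$, such that $V_t(\mathcal{B},\Gamma)=0$ if and only if $\mathcal{B}_0$ covers $\Gamma$.

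The construction has three ingredients. First, keep the $m$ boxes of $\mathcal{B}_0$. Second, add $t$ boxes each of which contains $\Gamma$ (if the problem requires the boxes to be distinct, take the $j$-th such box to be $\Gamma$ inflated by $j\varepsilon$ in every direction for a tiny $\varepsilon>0$; each still contains $\Gamma$). Third, pad with $n-m-t$ ``empty'' boxes disjoint from $\Gamma$, which do not change the depth within $\Gamma$ of any point, hence do not change any $V_i$. For the choice of $n$: since $1<f(n)$ we have $t=n/f(n)<n$, so for the usual families of $f$ (constant, $n^{\alpha}$, $n$) one can pick $n=\Theta(m)$ for which $n/f(n)$ is a positive integer and $n-t\ge m$; in fully general (pathological) cases one reads the index as $\lfloor n/f(n)\rfloor$ and takes $n$ polynomial in $m$, which affects only the rounding in the statement. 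The whole construction runs in time $\bigo{n}$.

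The core of the argument is a depth shift. Every point $q\in\Gamma$ lies in all $t$ boxes containing (an inflation of) $\Gamma$ and in none of the empty boxes, so $\mathrm{depth}_{\mathcal{B}}(q)=\mathrm{depth}_{\mathcal{B}_0}(q)+t$. Hence the region of $\Gamma$ of depth exactly $t$ in $\mathcal{B}$ is exactly $\Gamma\setminus\bigcup_{b\in\mathcal{B}_0}b$, the part of $\Gamma$ left uncovered by $\mathcal{B}_0$. Therefore $V_{n/f(n)}(\mathcal{B},\Gamma)$ equals $|\Gamma|$ minus the \textsc{Klee's Measure} of $\mathcal{B}_0$ within $\Gamma$; since a nonempty set obtained from $\Gamma$ by removing a finite union of boxes has positive volume, this equals $0$ exactly when $\mathcal{B}_0$ covers $\Gamma$. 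Thus an algorithm computing $V_{n/f(n)}$ on $n$ boxes in time $T(n)$ decides \textsc{Coverage} on $m$ boxes in time $T(\bigo{m})+\bigo{m}$, which is the claimed hardness.

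The step I expect to be the only delicate one is the size bookkeeping: forcing the coefficient index $n/f(n)$ to coincide with the number $t$ of domain-covering boxes we are allowed to add, while keeping $n$ within a constant (or at worst polynomial) factor of $m$, for an arbitrary admissible $f$. For all natural $f$ this is immediate; the clean way to phrase the lemma so that it holds for every $f$ with $1<f(n)\le n$ is to interpret $n/f(n)$ as $\lfloor n/f(n)\rfloor$, which leaves the ``as hard as \textsc{Coverage}'' conclusion intact. The remaining pieces --- the depth shift, the fact that empty boxes are harmless, and the positive-volume observation --- are routine.
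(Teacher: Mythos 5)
Your proof is correct and follows essentially the same route as the paper's: reduce from \textsc{Coverage} by adding $n/f(n)$ boxes that contain the domain, so that the uncovered region of $\Gamma$ is shifted to depth exactly $n/f(n)$ and the corresponding component of the \textsc{Depth Distribution} vanishes iff $\mathcal{B}_0$ covers $\Gamma$. If anything, your write-up is more careful than the paper's (which only works out the arithmetic for $f(n)=c$ constant and ignores padding and integrality), so the extra bookkeeping you flag as delicate is a genuine improvement rather than a deviation.
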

\begin{proof}
Construct $\mathcal{B}'$ by adding $\frac{n}{c-1}$ boxes to $\mathcal{B}$ covering the entire domain. The size $n'$ of $\mathcal{B}'$ is $n'=\frac{cn}{c-1}$, and the fraction of boxes added with respect to $n'$ is $n'/c$. Since the boxes added cover completely the domain, $\mathcal{B}$ covers the domain if and only if the $(n'/c)$-th element of the \textsc{Depth Distribution} of $\mathcal{B}'$ is zero.
\qed
\end{proof}

The above lemma allows to argue that potentially improved running times based on being sensitive to the index $i$ of the selected component will still be within $\Omega(n^{d/2})$, for many values of $i$ (except maybe for $i$ within $n-\emph{o}(n)$).

\begin{TODO}
	Add running time base on $i$ which improves the running time within the range $[n^{d/2}, n^{(d+1)/2}]$
\end{TODO}

\end{INUTILE}

\section{Discussion}\label{sec:discussion}		

The \textsc{Depth Distribution} captures many of the features in common between \textsc{Klee's measure} and \textsc{Maximum Depth}, so that new results on the computation of the \textsc{Depth Distribution} will yield corresponding results for those two measures, and has its own applications of interest. 
Nevertheless, there is no direct reduction to \textsc{Klee's measure} or \textsc{Maximum Depth} from \textsc{Depth Distribution}, as the latter seems computationally more costly, and clarifying further the relationship between these problems will require finer models of computation.
We discuss below some further issues to ponder about those measures.

\noindent\emph{Discrete variants.}
\label{sec:discreteVariants}
In practice, multidimensional range queries are applied to a database of multidimensional points. 
This yields discrete variants of each of the problems previously discussed~\cite{AboKhamis2015,YildizHershbergerSuri11}. 
In the \textsc{Discrete Klee's measure}\begin{TODO}
Add here alternative names introduced by ~\cite{AboKhamis2015} and \cite{YildizHershbergerSuri11}
\end{TODO}, the input is composed of not only a set $\cal B$ of $n$ boxes, but also of a set $S$ of $m$ points. The problem is now to compute not the volume of the union of the boxes, but the number (and/or the list) of points which are covered by those boxes.
\begin{TODO}
Describe here in two sentences each the work of
 \textcite{AboKhamis2015} and 
 \textcite{YildizHershbergerSuri11}
\end{TODO}
Similarly, one can define a discrete version of the \textsc{Maximum Depth} (which points are covered by the maximum number of boxes) and of the \textsc{Depth Distribution} (how many and which points are covered by exactly $i$ boxes, for $i\in[1..n]$).
Interestingly enough, the computational complexity of these discrete variants is much less than that of their continuous versions when there are reasonably few points~\cite{YildizHershbergerSuri11}: the discrete variant becomes hard only when there are many more points than boxes~\cite{AboKhamis2015}.
Nevertheless, ``easy'' configurations of the boxes also yield ``easy'' instances in the discrete case: it will be interesting to analyze the discrete variants of those problems according to the measures of \emph{profile} and \emph{k-degeneracy} introduced on the continuous versions. 
\begin{TODO}

\noindent\emph{Finer Multivariate Analysis.}
\label{sec:furtherRefinement}
Whereas we refined the worst case complexity over instances of fixed size by introducing and taking account various parameters, there is still room for further refinements.
Javiel: add here some of the open questions

\end{TODO}

\noindent\emph{Tighter Bounds.}
\label{sec:improvedLowerBounds}
Chan~\cite{Chan2013} conjectured that a complexity of $\Omega(n^{d/2})$ is required to compute the \textsc{Klee's Measure}, and hence to compute the \textsc{Depth Distribution}.
However, the output of \textsc{Depth Distribution} gives much more information than the \textsc{Klee's Measure}, of which a large part can be ignored during the computation of the \textsc{Klee's Measure} (while it is required for the computation of the \textsc{Depth Distribution}).  It is not clear whether even a lower bound of $\Omega(n^{d/2 + \epsilon})$ can be proven on the computational complexity of the \textsc{Depth Distribution} given this fact.
\begin{TODO}
JAVIEL: Complete here.
\end{TODO}


\medskip 
\noindent\textbf{Funding:} All authors were supported by the Millennium Nucleus ``Information and Coordination in Networks'' ICM/FIC RC130003. J\'er\'emy Barbay and Pablo P\'erez-Lantero were supported by the projects CONICYT Fondecyt/Regular nos 1170366 and 1160543 (Chile) respectively, while Javiel Rojas-Ledesma was supported by CONICYT-PCHA/Doctorado Nacional/2013-63130209 (Chile).

\printbibliography




\end{document}